\begin{document}

\title{Protein structure prediction software generate two different
  sets of conformations\\Or the study of unfolded self-avoiding walks}
\author{Jacques M. Bahi, Christophe Guyeux, Jean-Marc Nicod,\\ and
  Laurent Philippe\thanks{Authors in alphabetic order}\\
  Computer science laboratory DISC,\\ FEMTO-ST Institute, UMR 6174 CNRS\\
 University of Franche-Comt\'{e}, Besan\c con, France\\
 \textit{\{jacques.bahi, christophe.guyeux, jean-marc.nicod, laurent.philippe\}@femto-st.fr}}
\maketitle

\begin{abstract}
  Self-avoiding walks (SAW) are the source of very difficult problems
  in probabilities and enumerative combinatorics. They are also of
  great interest as they are, for instance, the basis of protein structure prediction
  in bioinformatics. Authors of this article have previously shown that,
  depending on the prediction algorithm, the sets
  of obtained conformations differ: all the self-avoiding walks can be reached
  using stretching-based algorithms whereas only the folded SAWs can be attained
  with methods that iteratively fold the straight line.
  A first study of (un)folded self-avoiding walks is
  presented in this article. The contribution is majorly a survey
  of what is currently known about these sets. In particular we provide 
  clear definitions of various subsets of self-avoiding walks
  related to pivot moves (folded or unfoldable SAWs, etc.) and the first results
  we have obtained, theoretically or computationally, on these sets.
  A list of open questions is provided too, and the consequences on the
  protein structure prediction problem is finally investigated.
\end{abstract}

\section{Introduction}

Self-avoiding walks (SAW) have been studied over decades, both for
their interest in mathematics and their applications in physics:
standard model of long chain polymers~\cite{Flory49}, fundamental
example in the theory of critical phenomena in equilibrium statistical
mechanics~\cite{Gordon11,Gennes72}, and so on.  They are the source of
very difficult problems in probabilities and enumerative
combinatorics~\cite{Bousquet11,Beaton12}, regarding among other things the number of $n-$step
SAW, their mean-square displacement, and the so-called scaling limit.
The self-avoiding walks naturally appear in bioinformatics, during the
prediction of the 3D conformation of a protein of interest.
Frequently, the two dimensional backbone of the protein is looked for
in a first stage, and then this 2D structure is refined step by step
to obtain the final 3D conformation.

Protein Structure Prediction (PSP) software can be separated into two
categories.  On the one hand, some algorithms construct the proteins'
structures on the 2D or 3D square lattice by adding, at each
iteration, a new amino acid at the queue of the protein.  Most of the
time, various positions are possible for this amino acid, and the
chosen position is the one that optimizes a given functional (for
instance, the number of neighboring hydrophobic amino acids).  On the
other hand, some algorithms start from the straight line having the
size of the considered protein, and they iterate pivot moves on this
structure, pivot amino acids and angles being chosen to optimize
another time a well-defined functional.  We have pointed out, in our
previous researches on the dynamics of the protein folding
process~\cite{bgcs11:ij,bgc11:ip}, that these two categories of
protein structure prediction software cannot produce the same
conformations~\cite{guyeux:hal-00795127}.  More precisely, in the
first category, all the conformations can be attained whereas it is
not the case in the second one.

Indeed this result, which is ignored by bioinformaticians, has been
formerly discovered by the community of mathematicians that studies
the self-avoiding walks (SAWs), even though the connection with
the PSP problem has not been signaled. In their article introducing the pivot
algorithm~\cite{Sokal88}, Madras and Sokal have demonstrated a theorem
showing that, when starting from the straight line of length $n$, and
iterating the 180° rotation and either both 90° rotations or both
diagonal reflections, all the $n-$step self-avoiding walks on
$\mathds{Z}^2$ can be obtained (or, in other words, their pivot
algorithm is ergodic for this set of transformations).  As a
counterexample, they depicted in this article a 223-step SAW in
$\mathds{Z}^2$ that is not connected to any other SAW by 90° rotations
(their counterexample is represented in Figure~\ref{SokalnrSAW}).  This first apparition of an
``unfolded'' SAW was indeed the unique one in the literature, and the study of (un)folded SAWs 
has not been deepened before our work in~\cite{guyeux:hal-00795127}.

In this article, the authors' intention is to produce a list of first
results and questionings about various sets of self-avoiding walks that
can (or cannot) be attained by $\pm 90$° pivot moves, and to deduce
consequences regarding the PSP software. After having recalled some basis
on self-avoiding walks, we provide definitions of 4 subsets of SAWs that
appear when considering such pivot moves, namely the folded SAWs obtained by 
iterating pivot moves on the straight line, the unfoldable
SAWs, the set of SAWs that can be folded at least once, and finally the subset
of self-avoiding walks that can be folded $k$ times, $k>1$. Then a list of
results we have obtained on these subsets is provided. Among other things, 
the Cardinality of folded SAWs  has been bounded, the infinite number of unfoldable
SAWs is established (the proof of this result, too long to be presented in this
article, is given in~\cite{articleTheoreme}), a shorter example of unfoldable walk
is given (107 steps), whereas the equality between the set of SAWs and the set of 
folded SAWs has been computationally verified until $n\leqslant 14$.
Relation between these subsets is then provided, before listing various open
problems on (un)foldable self-avoiding walks. Theoretical aspects of this
study are deepened in~\cite{articleTheoreme} whereas the computational ones
are detailed in~\cite{articleCalculs}.

The remainder of this document is organized as follows.  In the next
section, a short overview about the self-avoiding walks is
provided. This section enables us to introduce basic definitions and
well-known results concerning these walks.  Section~\ref{sec:first results}
contains the rigorous definition of the subsets of self-avoiding walks 
regarded in this manuscript.
Then, in
Section~\ref{sec:automatic gen}, the first results we have obtained
concerning the subset of unfolded SAW are detailed,
whereas a non-exhaustive list of open
questions is drawn up in Section~\ref{sec:openquestions}. Consequences
regarding the protein structure prediction problem are investigated in
Section~\ref{sec:consequences}. This research work ends by a
conclusion section, in which the contributions are summarized and
intended future work is proposed.

\section{A Short Overview of Self-Avoiding Walks}

We firstly recall usual notations and well-known results regarding
self-avoiding walks. We will bring partially, in a next section, 
these results in the folded SAWs subset.

\subsection{Definitions and Terminologies}

Let $\mathds{N}$ be the set of all natural numbers,
$\mathds{N}^*=\{1,2,\hdots\}$ the set of all positive integers, and
for $a,b\in \mathds{N}$, $a<b$, the notation $\llbracket
a,b\rrbracket$ stands for the set $\{a, a+1, \hdots, b-1, b\}$.  $|x|$
stands for the Euclidean norm of any vector $x\in \mathds{Z}^d,
d\geqslant 1$, whereas $x_1, \hdots, x_d$ are the $d$ coordinates of
$x$.  The $n-$th term of a sequence $s$ is denoted by $s(n)$.
Finally, $\sharp X$ is the Cardinality of a finite set $X$.

Let us now introduce the notion of self-avoiding walk~\cite{Madras93,Gordon11,Hughes95}.

\begin{figure}
\centering
\includegraphics[scale=0.5]{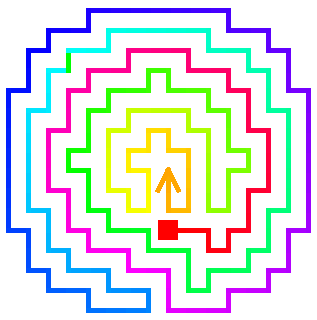}
\caption{The first SAW shown to be not connected
to any other SAW by 90° rotations (Madras and Sokal,~\cite{Sokal88}), that is, the first discovered unfoldable SAW.}
\label{SokalnrSAW}
\end{figure}

\begin{definition}[Self-Avoiding Walk]
  Let $d\geqslant 1$. A $n-$step \emph{self-avoiding walk} from
  $x\in\mathds{Z}^d$ to $y\in\mathds{Z}^d$ is a map $w:\llbracket
  0,n\rrbracket \rightarrow \mathds{Z}^d$ with:
\begin{itemize}
\item $w(0)=x$ and $w(n)=y$,
\item $|w(i+1)-w(i)|=1$, 
\item $\forall i,j \in \llbracket 0,n\rrbracket$, $i\neq j \Rightarrow
  w(i)\neq w(j)$ (self-avoiding property).
\end{itemize}
\end{definition}

Let $d\in \mathds{N}^*$.  $\mathcal{S}_n(x)$ is the set of $n-$step
self-avoiding walks on $\mathds{Z}^d$ from 0 to $x$, $c_n(x)=\sharp
\mathcal{S}_n(x)$ is the Cardinality of this set,
$\mathcal{S}_n=\cup_{x\in\mathds{Z}^d} \mathcal{S}_n(x)$ is
constituted by all $n-$step self-avoiding walks that start from 0,
whereas $c_n=\sum_{x\in\mathds{Z}^d} c_n(x)$ is the number of $n-$step
self-avoiding walks on $\mathds{Z}^d$ starting from 0, that is, $c_n =
\sharp\mathcal{S}_n$~\cite{Gordon11}.

\subsection{Well-known results about self-avoiding walks}

The objective of this section is not to realize a complete state of the art
about established or conjectured results on SAWs, but only to present
a few list of properties that are connected to our first investigations
regarding the folded self-avoiding walks. For instance, the well-known pattern 
theorem~\cite{Madras93} is not presented here. For further results about
SAWs, readers can consult for instance~\cite{Gordon11, Madras93}.

A first result concerning the number of $n-$step self-avoiding walks can be easily
obtained by remarking that, when $m-$step SAWs are concatenated to
$n-$step SAWs, we found all $(m+n)-$step self-avoiding walks \emph{and}
other walks having intersections. In other words,

\begin{proposition}
  $\forall m,n \in \mathds{N}^*, c_{m+n}\leqslant c_m c_n$.
\end{proposition}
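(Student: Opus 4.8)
The plan is to establish a concatenation (or ``supermultiplicativity-type'', here subadditivity-in-logarithm) inequality by exhibiting an injection from the set of pairs of shorter walks into a superset of the longer walks. Concretely, first I would fix $m,n \in \mathds{N}^*$ and consider the map that takes a pair $(u,v) \in \mathcal{S}_m \times \mathcal{S}_n$ — where $u$ is an $m$-step SAW from $0$ to some endpoint $x = u(m)$, and $v$ is an $n$-step SAW from $0$ to $v(n)$ — and produces the $(m+n)$-step lattice walk $w$ obtained by traversing $u$ and then a translated copy of $v$ anchored at $x$. Formally, $w(i) = u(i)$ for $0 \leqslant i \leqslant m$ and $w(m+j) = u(m) + v(j)$ for $0 \leqslant j \leqslant n$; the two pieces agree at the junction since $w(m) = u(m) = u(m) + v(0)$.

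Next I would check the two elementary properties that make $w$ a lattice walk of the right length: it has $m+n$ steps by construction, and each consecutive pair of vertices differs by a unit vector (for $i < m$ this is the step property of $u$, for $i \geqslant m$ it is the step property of $v$ after translation, and at $i = m$ the step $w(m+1) - w(m) = v(1) - v(0)$ is a unit vector because $v$ is a walk). So $w$ always lies in the set of $(m+n)$-step walks starting from $0$ that satisfy every condition of a SAW \emph{except possibly} the self-avoiding property — call this set of ``pre-walks'' $\mathcal{W}_{m+n}$, with $\mathcal{S}_{m+n} \subseteq \mathcal{W}_{m+n}$ and $\sharp \mathcal{W}_{m+n} = (2d)^{m+n}$ in general, though we only need $\mathcal{S}_{m+n} \subseteq \mathcal{W}_{m+n}$.

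Then I would argue the map $(u,v) \mapsto w$ is injective: from $w$ one recovers $u$ as the restriction $w|_{\llbracket 0,m \rrbracket}$ and $v$ as $j \mapsto w(m+j) - w(m)$, so distinct pairs give distinct concatenated walks. Finally, and this is the one genuine observation rather than bookkeeping, I would note that every $(m+n)$-step \emph{self-avoiding} walk $w$ is in the image of this map: splitting $w$ at vertex $m$, its first half is an $m$-step SAW (a sub-walk of a SAW is self-avoiding) and its translated second half is an $n$-step SAW, so $w$ comes from the pair $(w|_{\llbracket 0,m\rrbracket},\, j \mapsto w(m+j)-w(m))$. Hence $\mathcal{S}_{m+n}$ injects into $\mathcal{S}_m \times \mathcal{S}_n$, giving $c_{m+n} \leqslant c_m c_n$. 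The only ``obstacle'' worth flagging is purely expository: one must resist proving equality — the concatenation of two SAWs need not be self-avoiding (the translated second piece can revisit vertices of the first), which is exactly why the inequality is not an identity, and this is the point the surrounding text already emphasizes.
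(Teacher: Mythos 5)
Your proof is correct and is essentially the paper's argument, which is only sketched in the sentence preceding the proposition: concatenating $m$-step and $n$-step SAWs yields every $(m+n)$-step SAW (plus some self-intersecting walks), i.e.\ splitting at vertex $m$ injects $\mathcal{S}_{m+n}$ into $\mathcal{S}_m \times \mathcal{S}_n$. You have simply written out the details the paper leaves implicit.
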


The existence of the so-called \emph{connective constant} is a consequence of such 
a proposition.

\begin{theorem}
  The limit $\lim_{n \rightarrow \infty} c_n^{1/n}$ exists. It is
  called the \emph{connective constant} and is denoted by
  $\mu$. Moreover, we have $\mu^n \leqslant c_n$ and $d \leqslant \mu
  \leqslant 2d-1$.
\end{theorem}

Various bounds or estimates can be found in the
literature~\cite{jensen04,Gordon11}, like $c_n \approx A \mu^n
n^{\gamma -1}$ for $A$ and $\gamma$ to determine (predicted asymptotic
behavior) and
$$\mu \in [2.625662,2.679193].$$

The pivot algorithm is a dynamic Monte Carlo algorithm that produces
self-avoiding walks using the following basic approach~\cite{Sokal88}. 
Firstly, a point $p$ on the walk $w$ is picked randomly and used as a pivot. 
Then a random symmetry operation of the lattice, like a rotation,
 is applied to the second part (suffixes) of the walk, using $p$ as origin.
If the resulting walk is a SAW, it is accepted, else
it is rejected and $w$ is counted once again in the sample. 
A more detailed and precise algorithm can be found in~\cite{Sokal88}. 
In this article, it is shown that, quoting Madras and Sokal,

\begin{theorem}
  The pivot algorithm is ergodic for self-avoiding walks on
  $\mathds{Z}^d$ provided that all axis reflections, and either all
  90° rotations or all diagonal reflections, are given nonzero
  probability. In fact, any $N-$step SAW can be transformed into a
  straight rod by some sequence of $2N-1$ or fewer such pivots.
\end{theorem}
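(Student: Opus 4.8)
The plan is to turn the probabilistic statement into a purely combinatorial connectivity claim and then prove that claim by induction on $N$.

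\textbf{Step 1: reduction to connectivity.} The pivot chain lives on the finite set $\mathcal{S}_N$, so it is \emph{ergodic} iff it is irreducible and aperiodic. Aperiodicity is for free: applying a $180^\circ$ pivot at an interior vertex of the straight rod $R_N=\{0,e_1,2e_1,\dots,Ne_1\}$ always produces a self-intersection and is rejected, so the chain has a self-loop. The moves are reversible: if applying a lattice symmetry $g$ at pivot $p$ sends a SAW $\omega$ to a SAW $\omega'$, then applying $g^{-1}$ at $p$ sends $\omega'$ back to $\omega$, and $g^{-1}$ is admissible whenever $g$ is (axis reflections and diagonal reflections are involutions, and the $90^\circ$ rotations come in inverse pairs). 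Hence the one-accepted-move relation is symmetric, and irreducibility is equivalent to the assertion that every $\omega\in\mathcal{S}_N$ can be carried to one fixed reference walk by admissible pivots; we take that reference to be $R_N$ (any fixed state would do, which is why ``a straight rod'' suffices). A bound of $2N-1$ pivots from any $\omega$ to $R_N$ then yields the stated bound for arbitrary source/target pairs by concatenating and reversing, so it is enough to prove that bound.

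\textbf{Step 2: the unfolding claim.} I would prove by induction on $N$ that every $N$-step SAW reaches $R_N$ in at most $2N-1$ admissible pivots, making monotone progress measured by $\ell(\omega)\in\llbracket 1,N\rrbracket$, the length of the maximal \emph{terminal straight run} of $\omega$ (the largest $\ell$ for which the last $\ell$ steps all point the same way); note that $\ell(\omega)=N$ exactly when $\omega$ is a rod, and that the step entering the corner vertex $c=\omega(N-\ell)$ is necessarily perpendicular to the terminal direction, since a reversal would violate self-avoidance. The engine is a local lemma: \emph{if $\ell(\omega)=\ell<N$, then at most two admissible pivots, applied at vertices of the suffix at or before $c$, produce a SAW $\omega'$ with $\ell(\omega')\geqslant\ell+1$}. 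The natural candidate is the $90^\circ$ pivot at $c$ that rotates the terminal run into the direction of the step entering $c$, which lengthens the run by one; when the rotated suffix would meet the prefix one first performs a preparatory pivot — a reflection, or a rotation in a different coordinate plane — to clear the collision. Iterating the lemma raises $\ell$ to $N$, and bookkeeping of the increments (the initial and final ones being cheaper, the latter only turning an ``L'' into a straight segment and absorbing the final reorientation onto the $e_1$ axis) keeps the total at $2N-1$. The base case $N=1$ is immediate.

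\textbf{Step 3: the main obstacle.} The real work is the local lemma: showing that a \emph{bounded} number of admissible pivots always makes progress \emph{while preserving self-avoidance}. This requires enumerating the geometric ways the rotated or reflected suffix can intersect the prefix and, in each case, exhibiting an escape — sometimes a single pivot, sometimes a composition of two. This is precisely where the hypothesis on the symmetry group enters: having \emph{all} axis reflections together with \emph{either} all $90^\circ$ rotations \emph{or} all diagonal reflections guarantees an escape for every local obstruction, whereas a poorer move set breaks both the argument and the conclusion, as witnessed by the Madras--Sokal walk of Figure~\ref{SokalnrSAW}. Once the lemma is established, the reduction of Step~1 and the count of Step~2 are routine.
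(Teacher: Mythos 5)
The paper does not actually prove this theorem --- it is quoted verbatim from Madras and Sokal~\cite{Sokal88} --- so the only question is whether your proposal stands on its own. It does not: it is an outline whose entire mathematical content is deferred to the ``local lemma'' of Step~2, which you assert but do not prove, and which you yourself identify in Step~3 as ``the real work.'' That lemma (\emph{at most two admissible pivots always strictly increase the monotone functional while every intermediate walk stays self-avoiding}) is not a routine verification to be filled in later; it is the theorem. Everything else in your write-up --- the reduction of ergodicity to connectivity, reversibility of the moves, aperiodicity via a rejected reflection on the rod, and the arithmetic turning ``two pivots per unit of progress'' into the $2N-1$ bound --- is the easy part, and a reader is left with no argument for the one claim that distinguishes the true hypothesis (all axis reflections plus either all $90$° rotations or all diagonal reflections) from the false one ($90$° rotations alone, refuted by Figure~\ref{SokalnrSAW}).

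Beyond the omission, the specific plan you sketch is doubtful. You propose to increment the length $\ell(\omega)$ of the terminal straight run by pivoting at the last corner $c=\omega(N-\ell)$, with ``a preparatory pivot to clear the collision'' when the rotated suffix hits the prefix. But $c$ can sit deep inside a densely packed region, and there is no reason a single preparatory reflection or rotation at $c$ (or anywhere in the suffix) resolves an arbitrary collision pattern; one can easily imagine configurations needing an unbounded amount of clearing if you insist on working at that particular vertex. The Madras--Sokal argument avoids this precisely by \emph{not} pivoting at the last bend: it tracks an extremal functional such as the $\ell^1$ or $\ell^\infty$ span of the walk and pivots at a vertex on the boundary of the walk's bounding box (e.g., the last visit to a maximal coordinate value), chosen so that the reflected or rotated suffix lands in a half-space that the remainder of the walk provably cannot occupy --- self-avoidance of the result comes for free from extremality rather than from a case analysis of collisions. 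Your functional and pivot site lack this property, so the gap is not merely ``details omitted'': the induction as stated may not close, and you would need to replace the engine of Step~2 with the extremal-vertex argument to obtain a correct proof.
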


The pivot algorithm is ergodic too for SAWs on the square lattice~\cite{Sokal88}, 
provided that the 180° rotation, and either both 90°
rotations or both diagonal reflections, are given nonzero probability,
whereas 90° rotations alone are not enough, due to Fig.~\ref{SokalnrSAW}.

\section{Introducing the (un)folded self-avoiding walks}
\label{sec:first results}

\subsection{Protein folding as preliminaries}

Let us introduce the original context motivating the study of particular
subsets of SAWs we called ``folded'' self-avoiding walks in the remainder
of this document.


In the 2 or 3 dimensional square lattice hydrophobic-hydrophilic model,
simply denoted as \emph{HP model}, which is used for low resolution 
backbone structure prediction
of a given protein, hydrophobic interactions are supposed to dominate
protein folding~\cite{bgcs11:ij,bgc11:ip}.
 This model was formerly introduced by Dill~\cite{Dill1985}, who
considers that the protein core freeing up energy is
formed by hydrophobic amino acids, whereas hydrophilic amino acids
tend to move in the outer surface due to their affinity with the
solvent (see Fig.~\ref{fig:hpmodel}).

In this model, a protein conformation is a SAW on a 2D or 3D lattice,
depending on the level of resolution. This SAW is such that the free
energy $E$ of the protein, which depends on topological neighboring
contacts between hydrophobic amino acids that are not contiguous in
the primary structure, is minimal.  In other words, for an amino acid
sequence $P$ of length $n$ and for the set $\mathcal{C}(P)$ of all
$n-$step SAWs, the walk chosen to represent the conformation of the
protein is $C^* = min \left\{E(C) \mid C \in
  \mathcal{C}(P)\right\}$ \cite{Shmygelska05}.  In that context and
for a conformation (SAW) $C$, $E(C)=-q$ where $q$ is equal to the
number of topological hydrophobic neighbors.  For example, $E(c)=-5$
in Fig.~\ref{fig:hpmodel}.

The overriding problem in PSP is: \emph{how to find such a minimal conformation, given
all the  $n-$step self-avoiding walks and the sequence of hydrophobicity
of the protein~?}

\begin{figure}[t]
\centering
\includegraphics[width=2.375in]{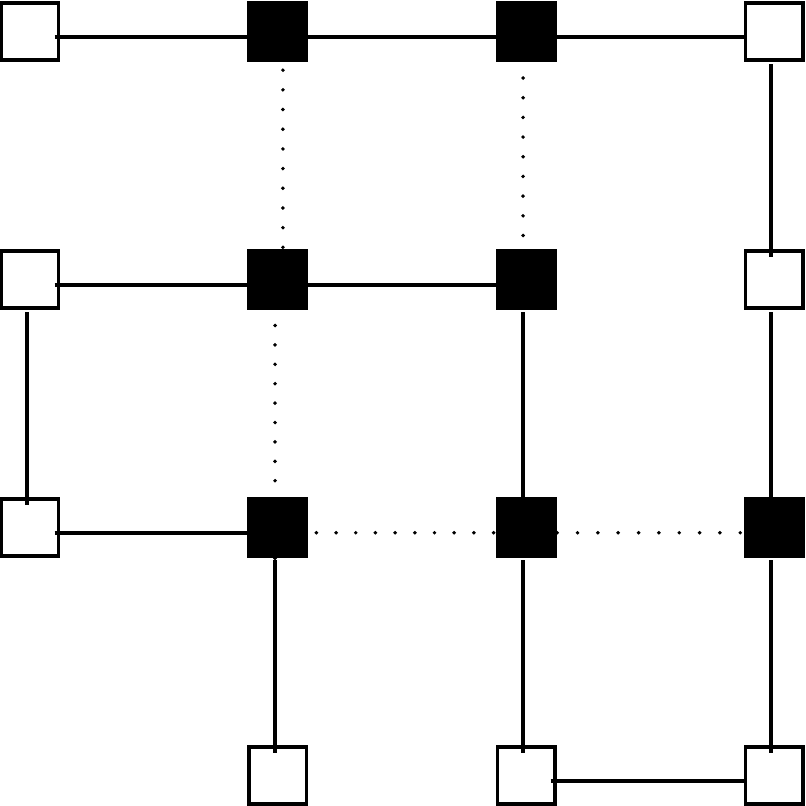}
\caption{Hydrophilic-hydrophobic model (black squares are
hydrophobic residues)}
\label{fig:hpmodel}
\end{figure}

To find the best 2D conformation of a protein, given its sequence of
hydrophobicity, is really not an easy task. Indeed authors
of~\cite{Crescenzi98} have proven that, considering the set of
self-avoiding walks having $n-$steps and whose vertices are either
black (hydrophobic) or white squares (hydrophylic residues), to
determine the SAWs of this set that maximize the number of neighboring
black squares is NP-hard.  Given a sequence of amino acids, such
statement leads to the use of heuristics to predict (and not to
determine exactly) the most probable conformation of the
protein. These heuristics operate as in the real biological world,
folding or increasing the length of SAWs in order to minimize the free
energy of the associated conformation: by doing so, the protein
synthesis in aqueous environment is reproduced \emph{in silico}.  As
stated previously, we have shown in a previous work that such
investigations potentially lead to various subsets of self-avoiding
walks~\cite{bgcs11:ij,bgc11:ip,guyeux:hal-00795127}.

In the first approach, starting from the straight line, we obtain by a
succession of pivot moves of 90° a final conformation being a
self-avoiding walk.  In this approach, it is not regarded whether the
intermediate walks are self-avoiding or not.  Such a method
corresponds to programs that start from the initial conformation, fold
several times the linear protein, according to their embedded scoring
functions, and then obtain a final conformation on which the SAW
requirement is verified.  It is easy to be convinced that, by doing
so, the set of final conformations is exactly equal to the set of
self-avoiding walks having $n$ steps. As the conformations obtained by
such methods coincide exactly to the well-studied global set of all
SAWs, such an approach is not further investigated in what
follows~\cite{guyeux:hal-00795127}.

\begin{figure}
\centering
\includegraphics[scale=0.7]{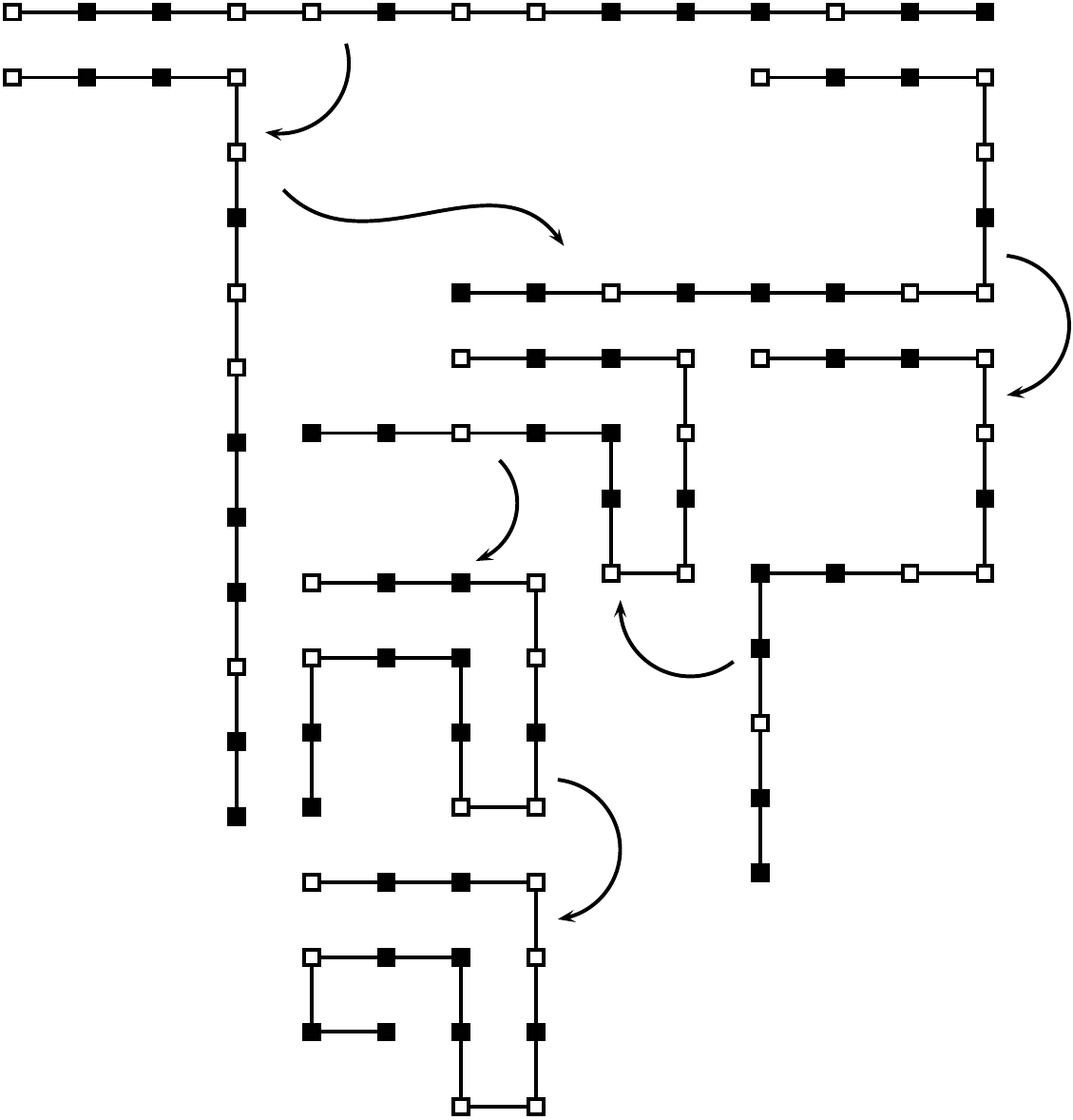}
\caption{Protein Structure Prediction by folding SAWs}
\label{exPliage}
\end{figure}

In the second approach, the same process is realized, except that all
the intermediate conformations must be self-avoiding walks (see
Fig.~\ref{exPliage}).  The set of $n-$step SAWs reachable by such a
procedure is denoted by $fSAW_n$ in what follows.  Such a procedure is
one of the two most usual translations of the so-called ``SAW
requirement'' in the bioinformatics literature, leading to proteins'
conformations belonging into $fSAW_n$.  For instance, PSP methodes
presented in~\cite{DBLP:conf/cec/IslamC10, Unger93,Braxenthaler97,
  DBLP:conf/cec/HiggsSHS10, DBLP:conf/cec/HorvathC10} follow such an
approach.  We have shown in~\cite{guyeux:hal-00795127} that
$fSAW_n\subsetneq \mathcal{S}_n$~\cite{Sokal88}. In other words,
\emph{in this first category of PSP software, it is impossible to
  reach all the conformations of $\mathcal{S}_n$}.

Other approaches in the same category can be imagined, like the
following one.  We can act as above, requiring additionally that no
intersection of vertex or edge during the transformation of one SAW to
another occurs. For instance, the pivot move of Figure~\ref{saw4} is
authorized in the previous $fSAW$ approach, but it is refused in the
current one: during the rotation around the residue having a cross,
the rigid structure after this residue intersects the remainder of
the ``protein'' (see Fig.~\ref{saw4bis}).  In this two dimensional
approach denoted by $fSAW'$, it is impossible for a protein folding
from one plane conformation to another plane one to use the 3D space
to achieve this folding.  A reasonable modeling of the true natural
folding dynamics of an already synthesized protein can be obtained by
extending this requirement to the third dimension.  However, due to
its complexity, this requirement is actually never used by tools that
embed a 2D HP square lattice model for protein structure prediction.
This is why these particular SAWs are not really investigated in
this document.  Let us just emphasize that $fSAW_n'$ is obviously a
subset of $fSAW_n$, but there is \emph{a priori} no reason to consider
them equal.  Indeed, Figure~\ref{saw3pasSaw4} shows that,

\begin{proposition}
For all $n\in\mathds{N}^*$, $fSAW_n'\subset fSAW_n$. However, 
$\exists n\in\mathds{N}^*$, $fSAW_n'\neq fSAW_n$.
\end{proposition}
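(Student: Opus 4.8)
The first inclusion is a direct consequence of the definitions. An admissible $fSAW_n'$ construction of a walk is a finite sequence of $90^\circ$ pivot moves, starting from the straight rod of length $n$, in which the rigid suffix is rotated continuously about the chosen pivot and no vertex–vertex or vertex–edge intersection ever occurs during the sweep; in particular each walk appearing in such a sequence — the start and end points of every move included — is self-avoiding, so the very same sequence is an admissible $fSAW_n$ construction. Hence every $n$-step SAW reachable under the $fSAW'$ rules is reachable under the $fSAW$ rules, i.e. $fSAW_n'\subset fSAW_n$ for every $n\in\mathds{N}^*$.

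For the strict inequality it suffices to exhibit one value of $n$ together with a walk $C\in\mathcal{S}_n$ belonging to $fSAW_n$ but not to $fSAW_n'$. The witness is the short walk of Figure~\ref{saw3pasSaw4}: it is built precisely so that the pivot move bringing it into existence — the rotation around the marked residue shown in Figure~\ref{saw4} — necessarily drags the rigid part lying beyond that residue across lattice cells already occupied by the prefix (Figure~\ref{saw4bis}). The plan is therefore twofold: (i) certify $C\in fSAW_n$ by writing down an explicit sequence of $90^\circ$ pivot moves from the straight rod to $C$ and checking, move by move, that every intermediate walk is self-avoiding; (ii) prove that \emph{no} admissible $fSAW_n'$ sequence produces $C$.

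Part (ii) carries all the weight. Because $n$ is small, the argument can be organized as a finite backward case analysis on the last pivot. For every choice of a pivot vertex $p$ of $C$ and every $90^\circ$ lattice rotation $r$, one checks that either the walk obtained from $C$ by applying $r^{-1}$ to the suffix starting at $p$ fails to be self-avoiding — so it cannot serve as the penultimate conformation — or else, as in Figure~\ref{saw4bis}, the continuous $90^\circ$ rotation of that rigid suffix about $p$ passes through a position overlapping the prefix, so the move is illegal for $fSAW_n'$. The few pre-images that do survive are themselves analysed in the same way; since every branch terminates (the number of SAWs of length at most $n$ is tiny for small $n$, and one may further prune by, e.g., the area of the bounding box), one reaches the conclusion that $C$ is unreachable, whence $C\in fSAW_n\setminus fSAW_n'$ and $fSAW_n'\neq fSAW_n$ for this $n$.

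The real obstacle is making the enumeration in (ii) airtight: one must rule out every indirect route, including those in which $C$ is produced by a collision-free pivot of a \emph{different} suffix after some non-obvious intermediate shape has been assembled. For the particular small walk of Figure~\ref{saw3pasSaw4} this is settled by brute enumeration, but a more conceptual proof — and a cleaner statement about \emph{which} $n$ first exhibit the gap — would come from isolating an invariant preserved by every collision-free planar pivot (a monotonicity of the bounding rectangle, or a parity constraint attached to the marked residue) that $C$ violates; finding such an invariant is, I expect, the genuinely delicate point.
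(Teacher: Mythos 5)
Your choice of witness and the overall logical structure (show the walk of Figure~\ref{saw3pasSaw4} is in $fSAW_n$ but that no collision-free pivot sequence can produce it) coincide with the paper's, and your inclusion argument is fine. But your treatment of the strict inequality is a plan rather than a proof: you set up a backward search over all candidate last pivots and then explicitly defer its completion (``the real obstacle'', ``the genuinely delicate point''), so as written nothing is established. The fact you are missing --- and the single fact the paper's one-line proof rests on --- is that the walk of Figure~\ref{saw3pasSaw4} admits exactly \emph{one} pivot move whose result is self-avoiding, namely the one at the marked residue, and that this unique move forces the rigid suffix to sweep through cells occupied by the prefix. Your backward case analysis therefore terminates immediately at depth one: the unique edge incident to this vertex in $\mathfrak{G}_n$ is forbidden under the $fSAW'$ rules, so the walk is isolated in the $fSAW'$ graph and cannot be reached from the straight line; there are no ``surviving pre-images'' to recurse on and no invariant of collision-free pivots to hunt for.

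On the other hand, you are more careful than the paper on the step you dismiss as routine: one does have to certify that the witness actually belongs to $fSAW_n$, i.e.\ that its unique neighbour lies in the connected component of the straight line. The paper takes this entirely for granted, and it sits uneasily with its later remark that this same walk ``can be folded exactly once'' (which, under the paper's own definition, would place it in a two-element component not containing the straight rod). So the part you flag as the delicate point is dispatched in one line by the uniqueness of the admissible pivot, while the part you flag as routine is the one the paper leaves genuinely unaddressed.
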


\begin{proof}
In Figure~\ref{saw3pasSaw4}, the unique possible
pivot move is the red dot, and obviously such move leads to the intersection
between the head and the queue of the structure during the transformation.
\end{proof}

\begin{figure}
\centering
\includegraphics[scale=0.4]{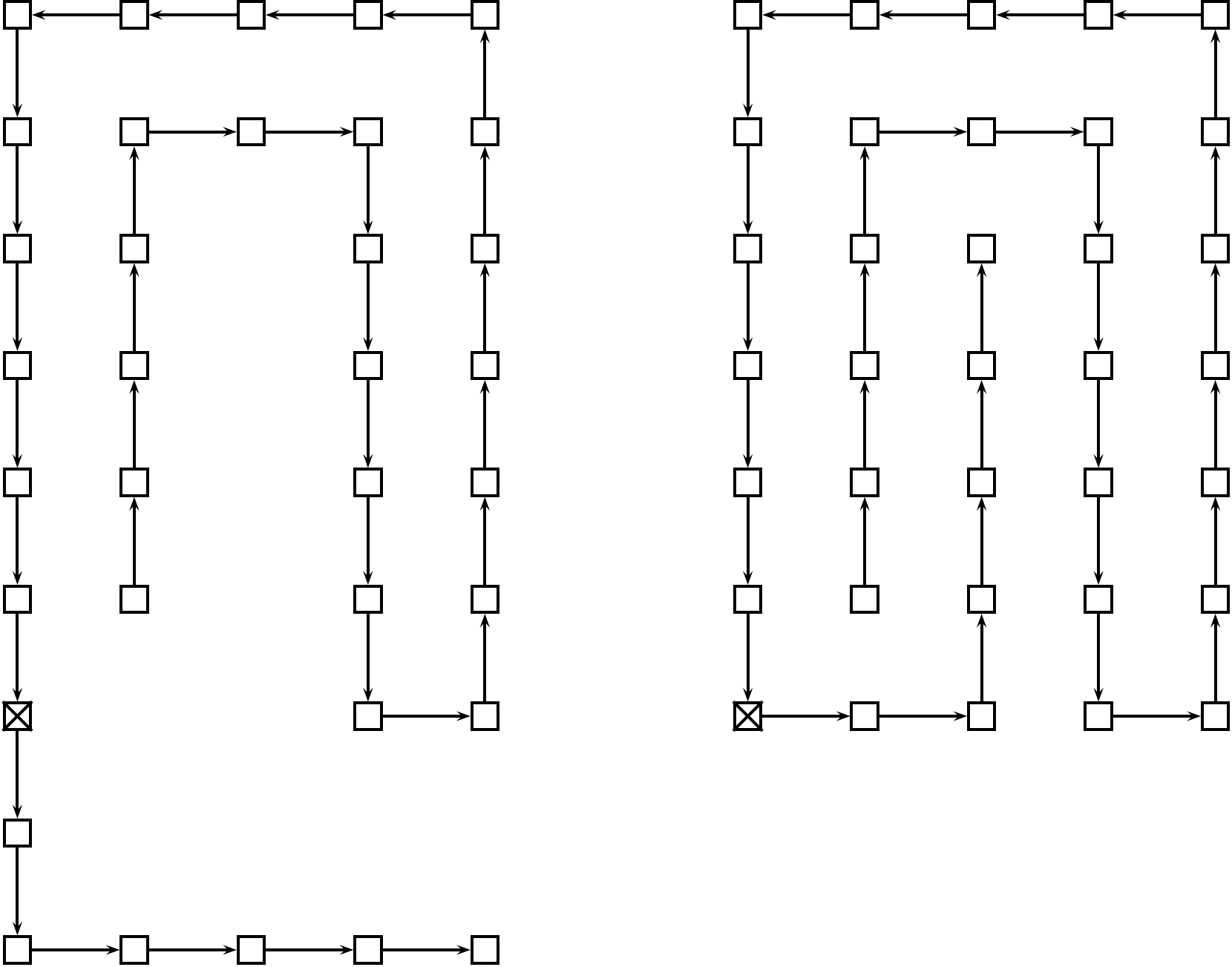}
\caption{Pivot move acceptable in $fSAW$ but not in $fSAW'$}
\label{saw4}
\end{figure}

\begin{figure}
\centering
\includegraphics[scale=0.35]{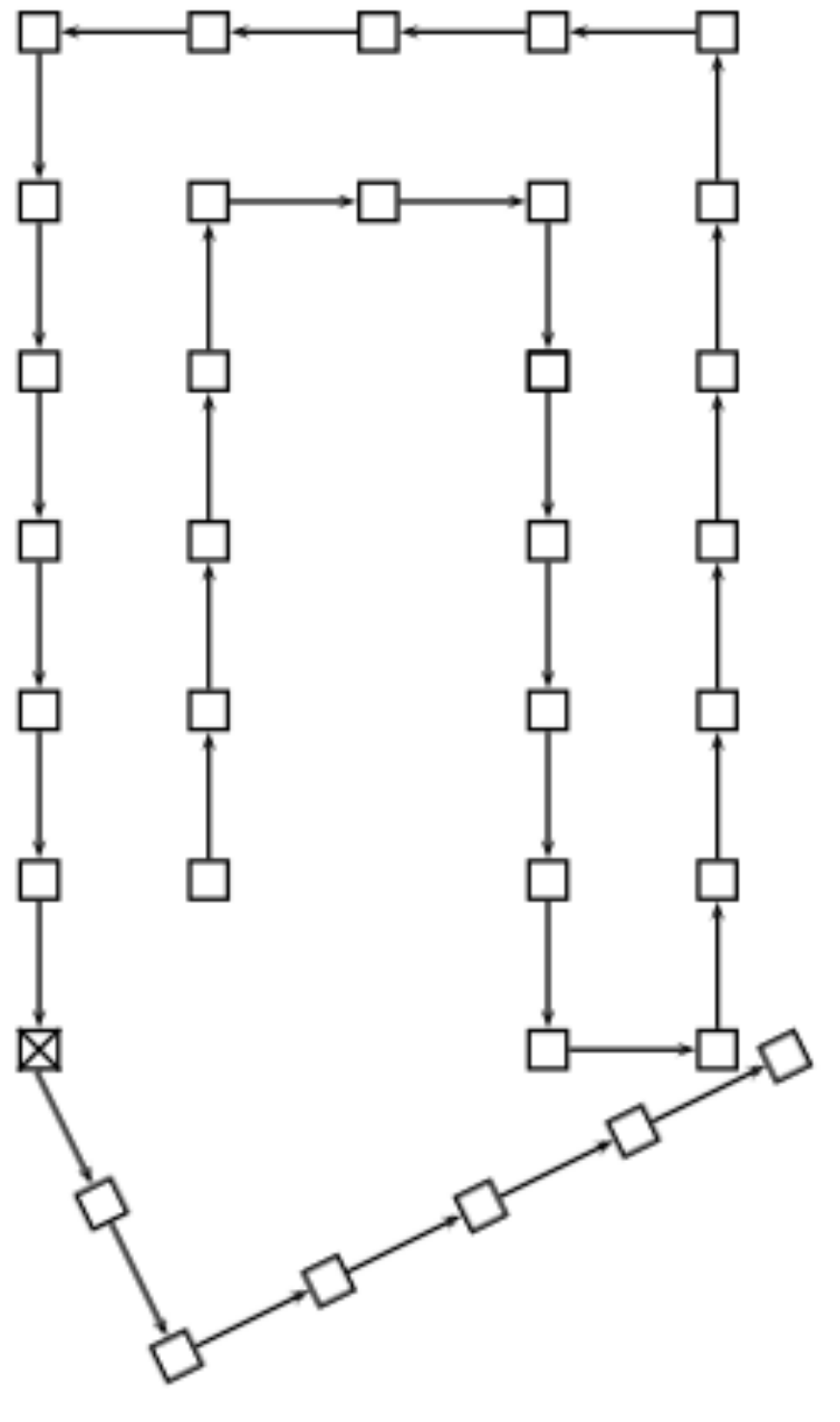}
\caption{An intersection appears between the head and the queue during
  the transformation, thus this pivot move is refused in $fSAW'$.}
\label{saw4bis}
\end{figure}

\begin{figure}
\centering
\includegraphics[scale=0.5]{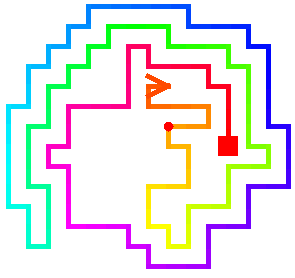}
\caption{$fSAW_n \neq fSAW_n'$}
\label{saw3pasSaw4}
\end{figure}

Note that we only studied pivot moves of $\pm 90$° in the three
previous approaches. But to consider other sets of transformations
could be interesting in some well-defined contexts, which can
potentially lead to different new subsets of SAWs.

A last bioinformatics approach of protein structure prediction using
self-avoiding walks starts with an $1-$step SAW, and at iteration $k$, a
new step is added at the queue of the walk, in such a way that the new
$k-$step self-avoiding walk presents the best value for the considered
scoring function (see Fig~\ref{exElongation}).  The protein is thus
constructed step by step, reaching the best local conformation at each
iteration.  It is easy to see that such an approach leads, another
time, to all the possible self-avoiding walks having the length of the
considered protein~\cite{guyeux:hal-00795127}.

\begin{figure}
\centering
\includegraphics[scale=0.7]{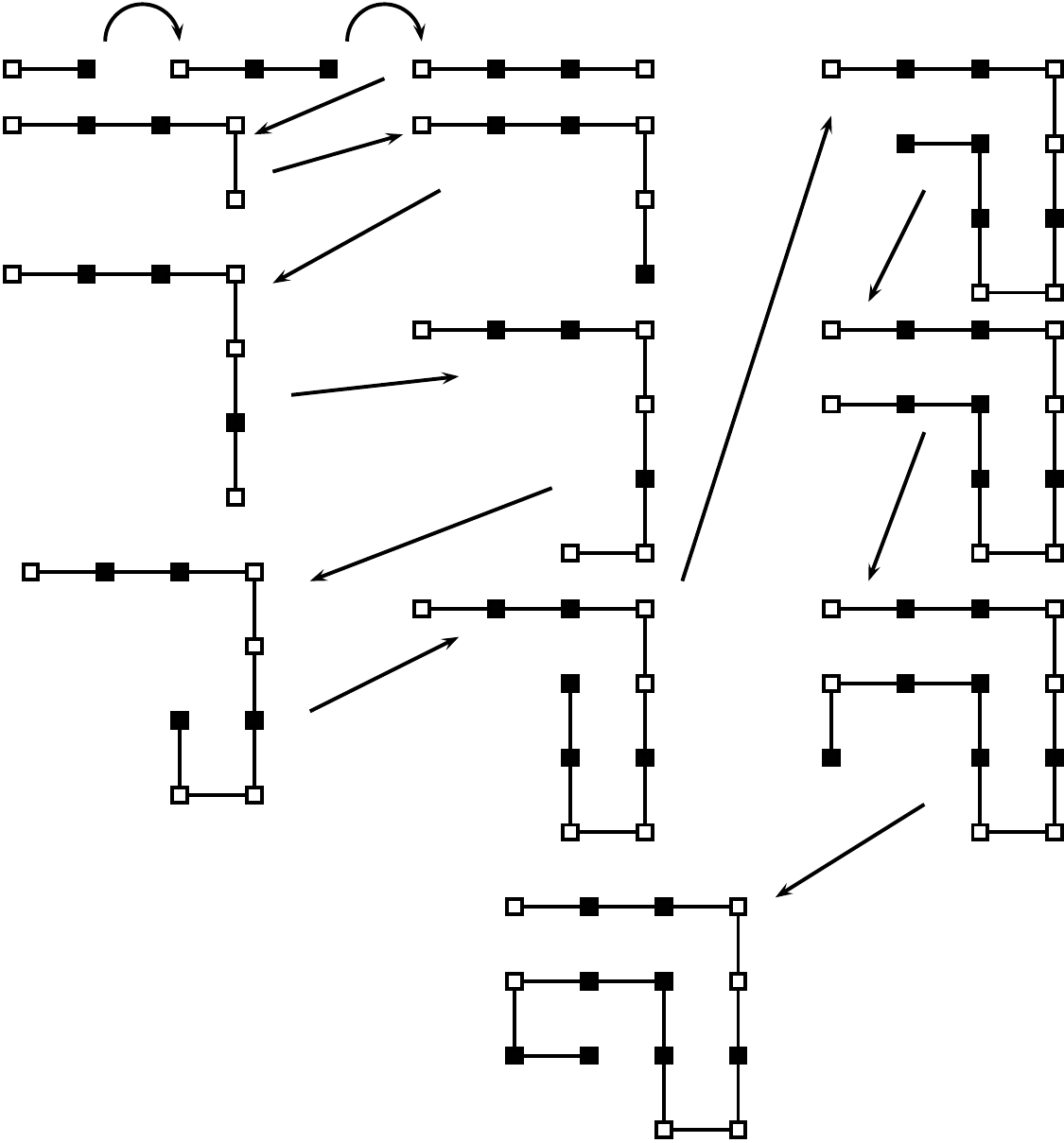}
\caption{Protein Structure Prediction by stretching SAWs}
\label{exElongation}
\end{figure}

In the remainder of this document, we give a more rigorous definition
of the $fSAW_n$ set, we initiate its study, and compare it to the
well-known $\mathcal{S}_n$ SAWs set.

\subsection{Notations}

Folded self-avoiding walks can be studied in a lattice having $d$ dimensions.
However, for the sake of simplicity, authors of this research work have
decided to introduce them only on the 2 dimensional square lattice $\mathds{Z}^2$,
to be as closed as possible to their field of application: the low resolution
backbone structure prediction of a protein. Such restriction enables us to 
produce understandable pictures of such not yet investigated particular walks.

One of the easiest way to define the folded self-avoiding walks
described previously, that appear during the realization of the SAW
requirement in PSP algorithms, is to introduce the absolute encoding
of a walk~\cite{Hoque09,Backofen99algorithmicapproach}.  In this
encoding, a $n+1-$step walk $w=w(0), \ldots, w(n)\in
\left(\mathds{Z}^2\right)^{n+1}$ with $w(0)=(0,0)$ is a sequence
$s=s(0), \ldots, s(n-1)$ of elements belonging into
$\mathds{Z}/4\mathds{Z}$, such that:
\begin{itemize}
\item $s(i)=0$ if and only if $w(i+1)_1=w(i)_1+1$ and $w(i+1)_2=w(i)_2$, that is, $w(i+1)$ is at the East of $w(i)$.
\item $s(i)=1$ if and only if $w(i+1)_1=w(i)_1$ and $w(i+1)_2=w(i)_2-1$: $w(i+1)$ is at the South of $w(i)$.
\item $s(i)=2$ if and only if $w(i+1)_1=w(i)_1-1$ and $w(i+1)_2=w(i)_2$, meaning that $w(i+1)$ is at the West of $w(i)$.
\item Finally, $s(i)=3$ if and only if $w(i+1)_1=w(i)_1$ and $w(i+1)_2=w(i)_2+1$ ($w(i+1)$ is at the North of $w(i)$).
\end{itemize}

Let us now define the following functions~\cite{guyeux:hal-00795127}.

\begin{definition}
  The \emph{anticlockwise fold function} is the function $f:
  \mathds{Z}/4\mathds{Z} \longrightarrow \mathds{Z}/4\mathds{Z}$
  defined by $f(x) = x-1 ~(\textrm{mod}~ 4)$ and the clockwise fold
  function is $f^{-1}(x) = x+1 ~(\textrm{mod}~ 4)$.
\end{definition}

Using the absolute encoding sequence $s$ of a $n-$step SAW $w$ that
starts from the origin of the square lattice, a pivot move of $+90$°
on $w(k)$, $k<n$, simply consists to transform $s$ into
$s(0),\hdots,s(k-1),f(s(k)),\hdots,f(s(n))$. Similarly, a pivot move
of $-90$° consists to apply $f^{-1}$ to the queue of the absolute
encoding sequence, like in Figure~\ref{ex1}.

\subsection{A graph structure for SAWs folding process}
\label{definitions}
We can now introduce a graph structure describing well the iterations of
$\pm 90$° pivot moves on a given self-avoiding walk.

Given $n \in \mathds{N}^*$, the graph $\mathfrak{G}_n$, formerly
introduced in~\cite{guyeux:hal-00795127}, is defined as follows:
\begin{itemize}
\item its vertices are the $n-$step self-avoiding walks, described in absolute 
encoding;
\item there is an edge between two vertices $s_i$, $s_j$ if and only
  if $s_j$ can be obtained by one pivot move of $\pm 90$° on $s_i$,
  that is, if there exists $k \in \llbracket 0, n-1\rrbracket$ s.t.:
\begin{itemize}
\item either $s_i(0),\hdots,s_i(k-1),f(s_i(k)),\hdots,f(s_i(n)) = s_j$
\item or $s_i(0),\hdots,s_i(k-1),f^{-1}(s_i(k)),\hdots,f^{-1}(s_i(n)) = s_j$.
\end{itemize}
\end{itemize}
Such a digraph is depicted in Figure~\ref{digraph}. The circled vertex is the
straight line whereas strikeout vertices are walks that are not self-avoiding.
Depending on the context, and for the sake of simplicity, $\mathfrak{G}_n$ will 
also refer to the set of SAWs in $\mathfrak{G}_n$ (\emph{i.e.}, its vertices).

\begin{figure}
\centering
\subfigure[000111]{\includegraphics[scale=0.5]{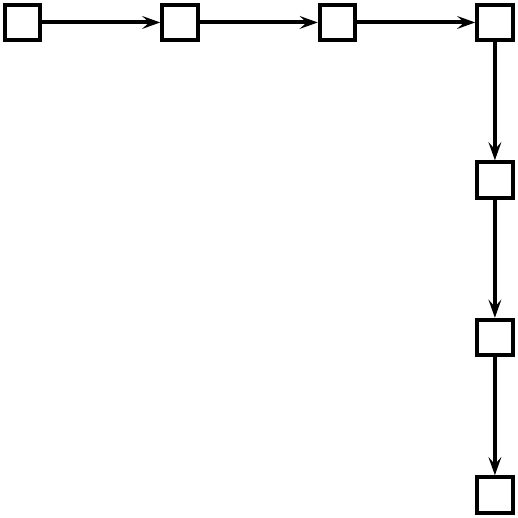}}\hspace{2cm}\subfigure[$001222=00f^{-1}(0)f^{-1}(1)f^{-1}(1)f^{-1}(1)$]{\includegraphics[scale=0.5]{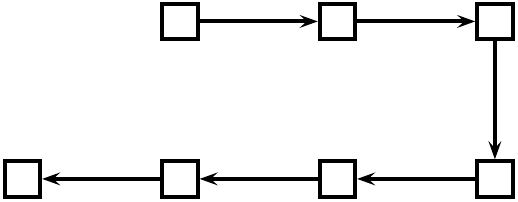}}
\caption{Effects of the clockwise fold function applied on the four
  last components of an absolute encoding.}
\label{ex1}
\end{figure}

Using this graph, the folded SAWs introduced in the previous section
can be redefined more rigorously.
\begin{definition}
  $fSAW_n$ is the connected component of the straight line $00\hdots
  0$ ($n$ times) in $\mathfrak{G}_n$, whereas $\mathcal{S}_n$ is
  constituted by all the vertices of $\mathfrak{G}_n$.
\end{definition}

The Figure~\ref{SokalnrSAW} shows that the connected component
$fSAW(223)$ of the straight line in $\mathfrak{G}_{223}$ is not equal
to the whole graph: $\mathfrak{G}_{223}$ is not connected. More
precisely, this graph has a connected component of size 1: Figure~\ref{SokalnrSAW} is 
totally unfoldable, whereas SAW of Fig.~\ref{saw3pasSaw4} can be folded 
exactly once. Indeed, to be in the same connected component is an
equivalence relation $\mathcal{R}_n$ on $\mathfrak{G}_n, \forall n \in
\mathds{N}^*$, and two SAWs $w$, $w'$ are considered equivalent (with
respect to this equivalence relation) if and only if there is a way to
fold $w$ into $w'$ such that all the intermediate walks are
self-avoiding. When existing, such a way is not necessarily unique.

These remarks lead to the following definitions.

\begin{definition}
Let $n\in\mathds{N}^*$ and $w \in \mathcal{S}_n$. We say that:
\begin{itemize}
\item $w$ is \emph{unfoldable} if its equivalence class, with
  respect to $\mathcal{R}_n$, is of size 1;
\item $w$ \emph{is a folded self-avoiding walk} if its equivalence
  class contains the $n-$step straight walk $000\hdots 0$ ($n-1$
  times);
\item $w$ \emph{can be folded $k$ times} if  a simple path of
  length $k$ exists between $w$ and another vertex in the same connected
  component of $w$.
\end{itemize}
Moreover, we introduce the following sets:
\begin{itemize}
\item $fSAW(n)$ is the equivalence class of the $n-$step straight walk, or the set of all folded SAWs.
\item $fSAW(n,k)$ is the set of equivalence classes of size $k$ in $(\mathfrak{G}_n,\mathcal{R}_n)$.
\item $USAW(n)$ is the set of equivalence classes of size 1
  $(\mathfrak{G}_n,\mathcal{R}_n)$, that is, the set of unfoldable walks.
\item $f^1SAW(n)$ is the complement of $USAW(n)$ in
  $\mathfrak{G}_n$. This is the set of SAWs on which we can apply at
  least one pivot move of $\pm 90°$.
\end{itemize}
\end{definition}

\begin{example}
  Figure~\ref{fig:2self-avoiding-219.2} shows the two elements of a
  class belonging into $fSAW(219,2)$ whereas Fig.~\ref{SokalnrSAW} is
  an element of $USAW(223)$.
\end{example}


\begin{figure}
\includegraphics[scale=0.45]{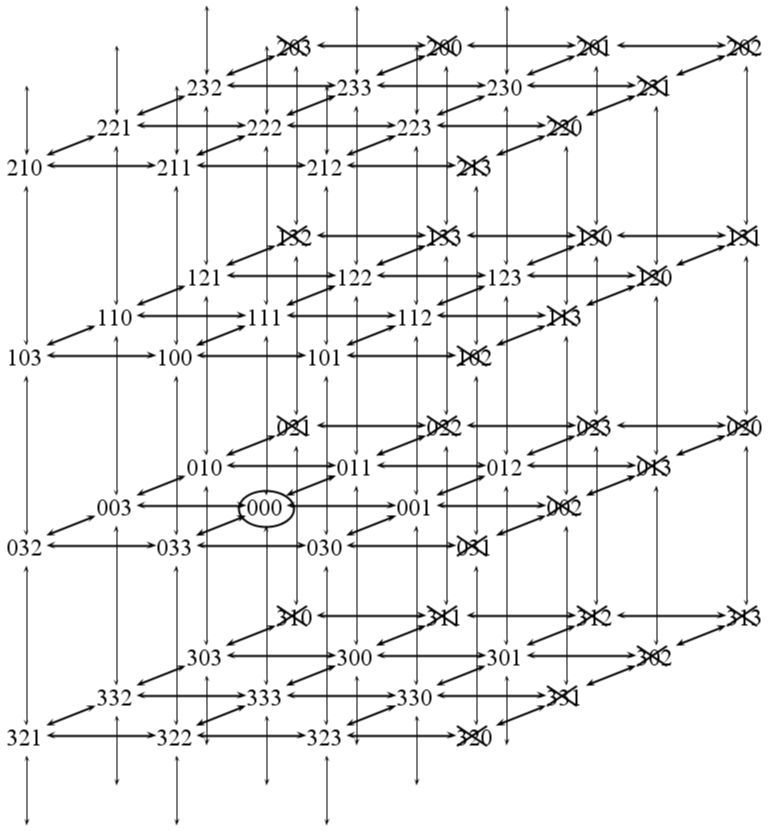}
\caption{The digraph $\mathfrak{G}_3 = fSAW(3)$}
\label{digraph}
\end{figure}

\begin{figure}
  \centering
  \subfigure{\includegraphics[scale=0.4]{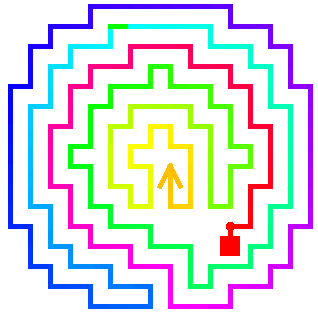}} 
  \hspace{2cm} 
  \subfigure{\includegraphics[scale=0.4]{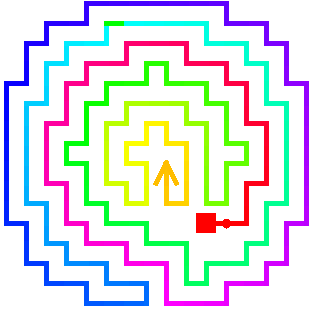}} 
  \caption{The two self-avoiding walks in
    $fSAW(219,2)$\label{fig:2self-avoiding-219.2}}
\end{figure}

\section{A Short List of Results on (un)folded Self-Avoiding Walks}
\label{sec:automatic gen}

We now give a first collection of easy-to-obtained results concerning
the particular SAW sets introduced in the previous section. These results
have been either obtained mathematically or by using computers. 

We firstly show that,

\begin{proposition}
\label{prop3}
The cardinality $\phi_n$ of $fSAW_n$ satisfies: $2^{n+2}\leqslant \phi_n \leqslant 4\times 3^n$. 
\end{proposition}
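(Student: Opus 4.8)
The plan is to show $\phi_n \leqslant \sharp\mathcal{S}_n = c_n$ trivially (since $fSAW_n$ is a subset of $\mathcal{S}_n$), so it suffices to prove $c_n \leqslant 4\times 3^n$. This is standard: a self-avoiding walk starting from the origin has $4$ choices for its first step, and at each subsequent step at most $3$ choices (it cannot immediately backtrack onto the vertex it just came from). Hence $c_n \leqslant 4\cdot 3^{n-1} \leqslant 4\times 3^n$, and a fortiori $\phi_n \leqslant 4\times 3^n$. I expect this half to be essentially immediate.

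**Lower bound.** Here the plan is to exhibit a family of at least $2^{n+2}$ distinct $n$-step SAWs, each of which is reachable from the straight line $00\ldots0$ by a sequence of $\pm 90^\circ$ pivot moves all of whose intermediate walks remain self-avoiding. The natural candidates are ``staircase''-type walks: walks whose absolute encoding uses only two adjacent directions, say only the letters $0$ (East) and $3$ (North). Any such $\{0,3\}$-sequence is automatically self-avoiding (the walk is monotone in both coordinates), so there are $2^n$ of them; applying the rotation group of the square ($4$ elements — or the full symmetry group, being careful about double-counting) to this monotone family should push the count to about $4\cdot 2^n = 2^{n+2}$. The key step is then to argue that every such staircase walk lies in $fSAW_n$: I would fold the straight line into the target staircase one ``corner'' at a time, starting the pivots from the end of the walk and working backward (or equivalently building the staircase greedily from one extremity), checking that at each stage the partially-folded walk is still confined to a monotone region and hence self-avoiding. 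One must also verify the $2^{n+2}$ walks are pairwise distinct and genuinely all land in the connected component of the straight line; handling the orbit of the symmetry group without overcounting (the straight line and near-straight walks have nontrivial stabilizers) is the bookkeeping one has to get right to reach exactly $2^{n+2}$ rather than a slightly smaller constant times $2^n$.

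**Main obstacle.** The delicate part is the lower bound — specifically, proving that each staircase (and each of its rotated/reflected images) is actually \emph{foldable} from the straight line with all intermediate walks self-avoiding, not merely that it is a SAW. A clean way is induction on the number of ``turns'': given a staircase with $k$ turns, remove the last turn to get one with $k-1$ turns, fold the straight line into that by the inductive hypothesis, and then perform one extra $\pm 90^\circ$ pivot at the appropriate vertex; the surgery must be arranged so that this last pivot rotates a straight terminal segment, which cannot create a collision. Making this precise — identifying which vertex to pivot and confirming the terminal segment is straight at that stage — is where the real care is needed. Once that is settled, counting the orbit under the dihedral symmetry group of $\mathds{Z}^2$ to extract the factor $4$ (giving $2^{n+2}$) is routine, modulo the small-stabilizer caveat mentioned above, which for $n\geqslant 1$ only costs lower-order terms and does not affect the stated bound.
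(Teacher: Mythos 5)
Your proposal follows essentially the same route as the paper: the upper bound comes from the $4\times 3^{n}$ count of walks without immediate reversals, and the lower bound from showing that the $2^{n}$ monotone staircase walks (absolute encoding in $\{0,3\}$) all lie in $fSAW(n)$, by induction on the number of corners (cranks) with one pivot move per corner, then multiplying by $4$ for the four quadrants. The stabilizer/overcounting caveat you raise (the straight lines lie in two quadrant orbits) is genuine but is simply not addressed in the paper, which asserts $\phi_n\geqslant 4\times 2^{n}$ directly from the lemma.
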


This result is a consequence of the following lemma.
\begin{lemma}
\label{lemme}
The $2^n$ $n-$step walks that take steps only in the positive
coordinate directions are in $fSAW(n)$.
\end{lemma}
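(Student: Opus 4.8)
The plan is to give an explicit algorithm that folds the straight walk $00\cdots 0$ into an arbitrary walk $t=t(0)\cdots t(n-1)$ whose steps all point East or North — i.e.\ $t(i)\in\{0,3\}$ in absolute encoding, which is exactly the collection of $2^n$ walks in the statement — using only $\pm 90°$ pivot moves and keeping every intermediate walk self-avoiding. First I would record two elementary observations. \emph{(i)} Every walk using only the steps $0$ (East) and $3$ (North) is self-avoiding: such a step increases exactly one coordinate by one, so $w(i)_1+w(i)_2=i$ and the $w(i)$ are pairwise distinct. \emph{(ii)} More generally, any walk whose encoding reads $a(0),\dots,a(k-1),c,c,\dots,c$ with $a(0),\dots,a(k-1)\in\{0,3\}$ and $c\in\{0,3\}$ (a ``staircase'' prefix followed by a straight tail pointing in a positive coordinate direction) is self-avoiding: the prefix vertices are distinct by \emph{(i)}, the tail vertices are pairwise distinct, and every tail vertex has strictly larger first coordinate (when $c=0$) or strictly larger second coordinate (when $c=3$) than every prefix vertex, since the prefix only moves East or North.

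Next I would run the following process over the positions $k=0,1,\dots,n-1$, maintaining the invariant that just before step $k$ the current walk has encoding $t(0),\dots,t(k-1),c,c,\dots,c$ for some $c\in\{0,3\}$; initially the prefix is empty and $c=0$. At step $k$, if $t(k)=c$ we do nothing, and the invariant persists at $k+1$ with tail direction $c=t(k)$. If $t(k)\neq c$ we apply one pivot at position $k$: when $c=0$ (so $t(k)=3$) a $+90°$ pivot applies $f$ to the tail, and since $f(0)=3$ the encoding becomes $t(0),\dots,t(k-1),3,3,\dots,3$; when $c=3$ (so $t(k)=0$) a $-90°$ pivot applies $f^{-1}$, and since $f^{-1}(3)=0$ the encoding becomes $t(0),\dots,t(k-1),0,0,\dots,0$. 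In either case position $k$ now equals $t(k)$ and the tail is again constant, so the invariant holds at $k+1$. The key point is that $f$ is only ever applied to an all-East tail and $f^{-1}$ only to an all-North tail, so no West or South step is ever produced; hence every intermediate walk has the shape of observation \emph{(ii)} and is self-avoiding, and each consecutive pair of walks in the process is indeed joined by an edge of $\mathfrak{G}_n$. After step $n-1$ the encoding is exactly $t$, so $t$ lies in the connected component $fSAW(n)$ of the straight walk.

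I do not expect any genuine obstacle here: the whole argument rests on observation \emph{(ii)}, and that is easy precisely because the prefix and the tail are both monotone and live in the same quadrant, so they cannot collide. The only minor fussiness is the harmless off-by-one in the paper's indexing of absolute encodings, which changes nothing. As a by-product the folding uses at most $n$ pivot moves; and applying the lemma to the rotated copies of the straight walk — each reachable from $00\cdots 0$ by a single global pivot at position $0$ — feeds into the lower bound of Proposition~\ref{prop3}.
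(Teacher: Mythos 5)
Your proof is correct. It rests on the same underlying idea as the paper's — a staircase walk is joined to the straight line by one pivot per crank, and all intermediate walks remain monotone staircases — but you run the construction in the opposite direction and justify self-avoidance more cheaply. The paper argues by induction on the number of cranks, starting from the target walk, pivoting at its \emph{first} crank, and checking that the intermediate walk is still self-avoiding via a half-plane separation around the pivot vertex (the head and the tail live in disjoint regions before and after the rotation). You instead build the target from the straight line, sweeping left to right and flipping the constant tail between all-East and all-North; every intermediate encoding then lies entirely in $\{0,3\}^n$, so self-avoidance is immediate from the monotonicity identity $w(i)_1+w(i)_2=i$. That makes your observation \emph{(ii)} strictly redundant — it is a special case of \emph{(i)}, since a $\{0,3\}$-prefix followed by a constant tail in $\{0,3\}$ is again a $\{0,3\}$-walk — but harmlessly so. The trade-off: the paper's half-plane argument is the one that would generalize to pivots applied to walks that are not globally monotone, whereas your version is shorter, gives an explicit folding path of at most $n$ pivots, and sidesteps a small imprecision in the paper's induction (after pivoting away the first crank the paper's intermediate walk need not literally take steps in the positive coordinate directions any more, only in a rotated pair of directions, so the induction hypothesis as stated has to be read up to a lattice symmetry — an issue your formulation avoids entirely).
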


This lemma can be proven using the number of cranks of a self-avoiding walk, 
defined below.

\begin{definition}[Crank]
Let $w$ be a $n-$step self-avoiding walk on $\mathds{Z}^2$ of absolute encoding
$s$. $w$ contains 
a crank at position $k\in\llbracket 1, n\rrbracket$ if $s(k-1)\neq s(k)$.
\end{definition}

\begin{proof}[Lemma~\ref{lemme}]
  Let $n \in \mathds{N}^*$.  We  show by a mathematical induction
  that, $\forall N \in \mathds{N}$, any $n-$step self-avoiding walk
  that (1) takes steps only in the positive coordinate directions, and
  (2) has $N$ cranks, is in $fSAW(n)$.

  The base case is obvious, as if $N=0$, then $w$ is a straight line.

  Let $N \in \mathds{N}$ such that the statement holds for all
  $k\leqslant N$, and consider a $n-$step self-avoiding walk $w$ that
  has $N+1$ cranks while taking steps only in the positive coordinate
  directions.  Let $j$ be the position of the first crank in $w$.  As
  steps are taken only in the positive coordinate directions, only two
  situations can occur (see Figure~\ref{preuve}):
  (1) $w(j)=w(j-1)+(1,0)$ and $w(j+1)=w(j)+(0,1)$ ($s(j-1)=0, s(j)=3$), or
  (2) $w(j)=w(j-1)+(0,1)$ and $w(j+1)=w(j)+(1,0)$ ($s(j-1)=3, s(j)=0$).

  Suppose now that the origin of the 2D square lattice is set to
  $w(j)$. So, in the first situation (1),

\begin{itemize}
  \item $\forall l>j$, $w(l)=(w(l)_1,w(l)_2)$ is such that $w(l)_1\geqslant 0$ while $w(l)_2\geqslant 1$,
  \item $\forall l<j,$ $w(l)=(w(l)_1,w(l)_2)$ is such that $w(l)_1\leqslant -1$ while $w(l)_2\leqslant 0$.
\end{itemize}
  The effect of a $90°$ pivot move on the origin $w(j)$ is to reduce
  the number of cranks $N+1$ to $N$ in $w$, and to map each $w(l)=(w(l)_1,w(l)_2)$
  into $(w(l)_2,w(l)_1)$, $\forall l>j$. After such a pivot move, the obtained
  walk $w'$ is such that $\forall l>j$, $w'(l)_1=w(l)_2\geqslant 1$,
  while $\forall l<j$, $w'(l)_1=w(l)_1\leqslant -1$.  In other words,
  the walk $w'$ still remains self-avoiding.  $w'$ having $N$ cranks,
  it belongs into $fSAW(n)$ due to the induction hypothesis.
  Furthermore, $w'$ is obtained by operating a pivot move on $w$, thus
  these two walks belong into the same connective component of
  $\mathfrak{G}_n$. Finally, $w \in fSAW(n)$.

  The second situation (2) also can be handled in that way, which concludes
  the mathematical induction and the proof of the lemma.
\end{proof}

\begin{proof}[Proposition~\ref{prop3}]
  Due to Lemma~\ref{lemme}, we have $\phi_n \geqslant 4\times 2^{n}$
  ($4 \times$ because of the 4 quarters of the square lattice).  And
  since the set of $n-$step walks without immediate reversals has
  cardinality $4\times 3^n$ and contains all $n-$step folded
  self-avoiding walks, we have $\phi_n \leqslant 4\times 3^n$.
\end{proof}

\begin{figure}
\centering
\includegraphics[scale=0.4]{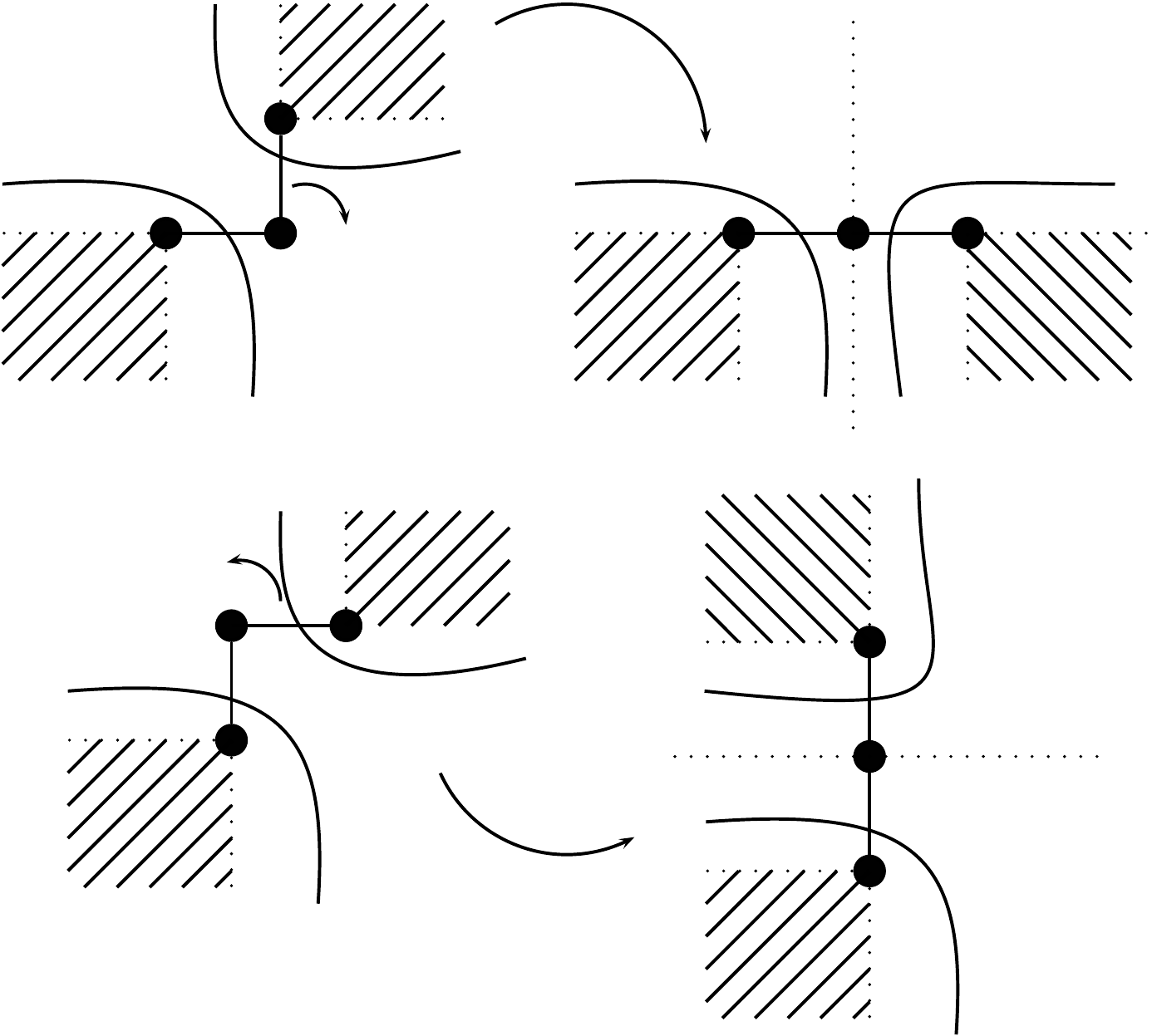}
\caption{Walks that contain only 3 and 0 in their absolute encoding
  are folded SAWs: reducing the number of cranks does not introduce
  intersections in the walk.}
\label{preuve}
\end{figure}

\begin{remark}
In particular, SAWs whose absolute encoding is only constituted by 0's and 1's
are folded SAWs. It is quite possible that a few 2's or 3's can be added without
breaking the folded character of the walk, meaning that the lower bound could be
increased.
\end{remark}

We can now give a result regarding the $USAW(n)$ set of self-avoiding walks.

\begin{theorem}
There is an infinite number of $n$ such that $USAW(n)$ is nonempty.
In particular, the number of unfoldable SAWs is infinite.
\end{theorem}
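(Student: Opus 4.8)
The plan is to argue constructively: to exhibit an explicit infinite family $(W_m)_{m\geqslant m_0}$ of self-avoiding walks whose lengths $n_m$ are strictly increasing and each of which lies in $USAW(n_m)$; since then $USAW(n)\neq\emptyset$ for the infinitely many values $n=n_m$, the theorem follows. A single unfoldable walk is already available --- the $223$-step example of Madras and Sokal (Figure~\ref{SokalnrSAW}), or the shorter $107$-step one mentioned earlier --- so the real task is to turn ``one'' into ``infinitely many''. The first step is to read off from such an example the mechanism that forbids \emph{every} $\pm 90$° pivot. Two ingredients are at work. (a) Near each of the two endpoints sits a finite \emph{locking cap}: a small fixed pattern arranged so that, exactly as in the base example, no pivot whose pivot point lies within a bounded distance of an endpoint is admissible (the endpoint and its nearby vertices are so tightly enclosed that the first rotated step already collides with the already-drawn part of the walk). (b) In between, the walk winds through a densely packed region --- spiral-like, or a rolled-up ``comb'' of parallel lanes --- arranged so that every maximal straight run of the walk has, on \emph{both} sides and at lattice distance $1$, further portions of the same walk running parallel to it.

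Next I would define $W_m$ by keeping the two locking caps fixed and letting the densely packed middle grow with $m$: add one more winding (one more lane of the comb) for each increment of $m$, routed so that (i) $W_m$ is a genuine self-avoiding walk for every $m\geqslant m_0$; (ii) the two caps still attach consistently at the innermost and outermost ends of the winding; and (iii) the ``flanked on both sides at distance $1$'' property of (b) holds for every straight run, including the newly created ones. The length $n_m$ is then an affine function of $m$, hence $n_m\to\infty$.

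It remains to prove $W_m\in USAW(n_m)$, i.e. that the equivalence class of $W_m$ under $\mathcal{R}_{n_m}$ is the singleton $\{W_m\}$: for every pivot position $k\in\llbracket 0,n_m-1\rrbracket$ and every sign, applying the relevant fold function ($f$ or $f^{-1}$) to the suffix of the absolute encoding from position $k$ must yield a non-self-avoiding walk. I would split the positions $k$ into those within a bounded distance of one of the endpoints --- a finite, $m$-independent list, dispatched exactly as in the base example via the locking caps --- and those lying in the densely packed middle. For the latter, $w(k)$ sits on some maximal straight run $R$; after a $\pm 90$° pivot the first rotated step leaves the line supporting $R$, and by property (b) it immediately occupies a vertex already used by one of the two portions flanking $R$, so the move is rejected. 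Taken together the two cases show that no $\pm 90$° pivot is admissible on $W_m$; as the $n_m$ are pairwise distinct and unbounded, this proves the theorem. (Alternatively, the same conclusion would follow from a bootstrapping lemma --- every unfoldable $n$-step SAW embeds into an unfoldable SAW of strictly larger length --- obtained by splicing the scalable winding of (b) into the existing example; but producing and verifying the explicit family seems the more transparent route.)

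I expect the middle-region analysis, carried out uniformly in $m$, to be the main obstacle. Property (b) is clean for straight runs well inside the winding, but one must also handle the \emph{corners} where one run meets the next and the \emph{interfaces} where the winding meets the two caps; there the local picture is not simply ``two parallel flanking lanes'', and each configuration must be checked by hand for both signs of the pivot. Simultaneously one has to certify that $W_m$ is self-avoiding for all $m$ and that the caps keep functioning as the winding grows around them. It is precisely this finite-but-delicate bookkeeping, multiplied over a family, that makes the complete argument long, and is why the authors defer it to~\cite{articleTheoreme}.
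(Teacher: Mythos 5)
Your proposal takes essentially the same route as the paper: the paper's own ``proof'' is only a pointer to~\cite{articleTheoreme}, described as a recursive construction of ever-longer unfoldable walks (the family $w_0, w_1, w_2, w_3$ of $239$, $391$, $575$, $791$ steps), which is exactly the parametric-family strategy you outline --- fixed locking end-structures plus a scalable densely packed middle, with a case split on pivot positions near the ends versus in the interior. Like the paper, you defer the finite-but-delicate verification to a longer companion argument, so your sketch is faithful to what the paper itself provides here.
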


\begin{proof}
A proof of this result, too long to be contained in this work, can be
found in~\cite{articleTheoreme}. It consists to create a recursive construction
process of unfoldable self-avoiding walks, as depicted in Figure \ref{rtuc}.
\end{proof}

\begin{figure}[h!]
\begin{center}
  \subfigure[$w_0$ (239-step walk)]{\includegraphics[scale=0.2]{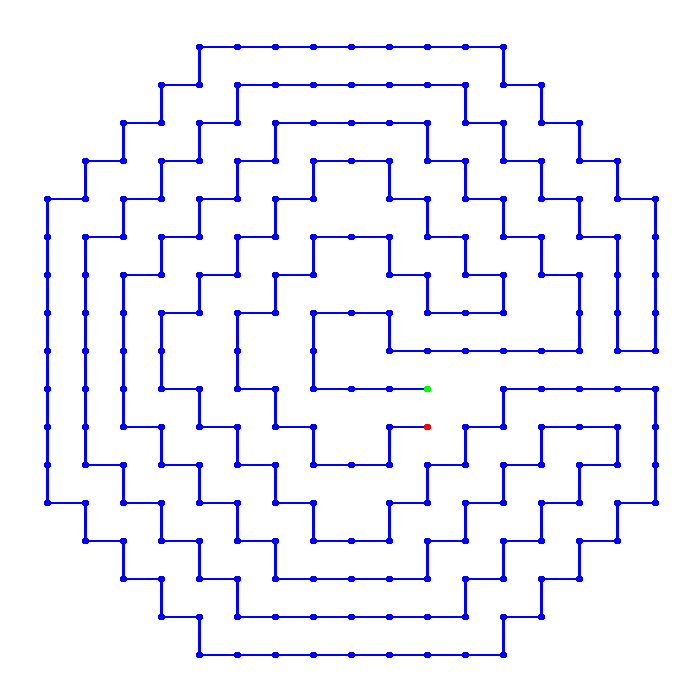}\label{nfSAW0}}\quad 
  \subfigure[$w_1$ (391-step walk)]{\includegraphics[scale=0.2]{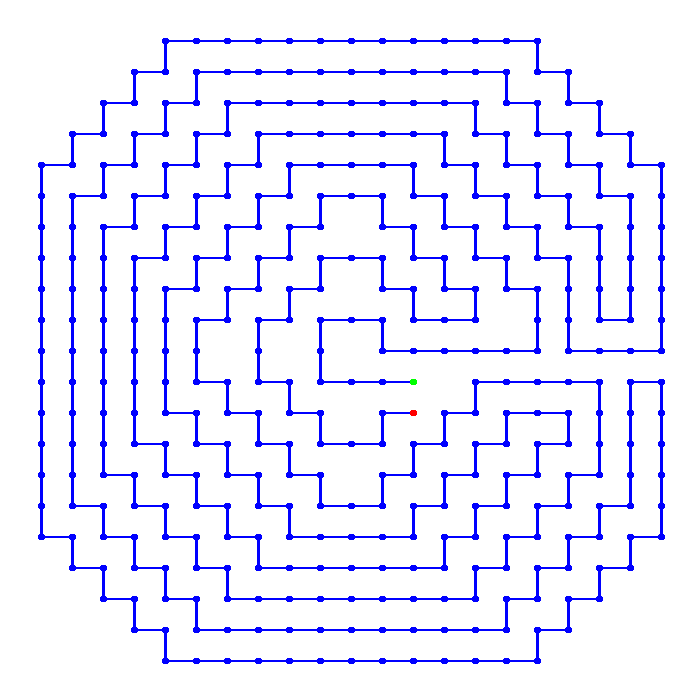}}\\
  \subfigure[$w_2$ (575-step walk)]{\includegraphics[scale=0.2]{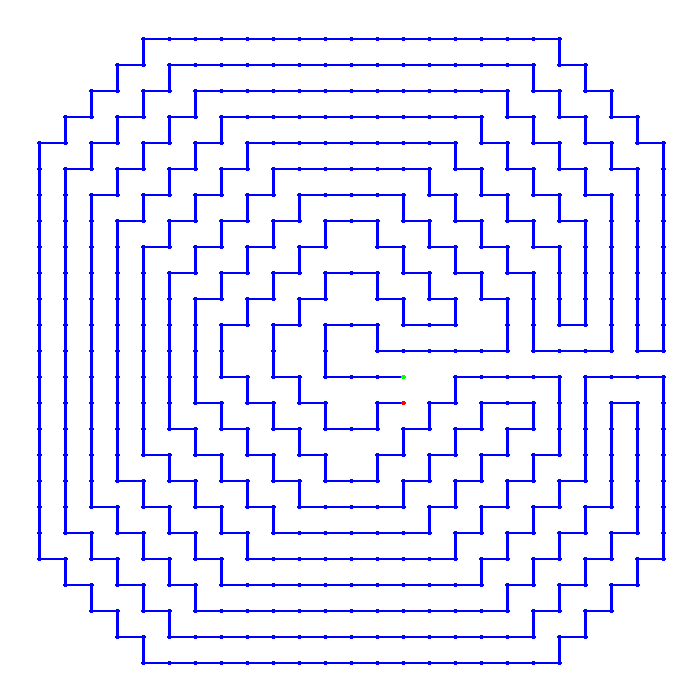}}\quad 
  \subfigure[$w_3$ (791-step walk)]{\includegraphics[scale=0.2]{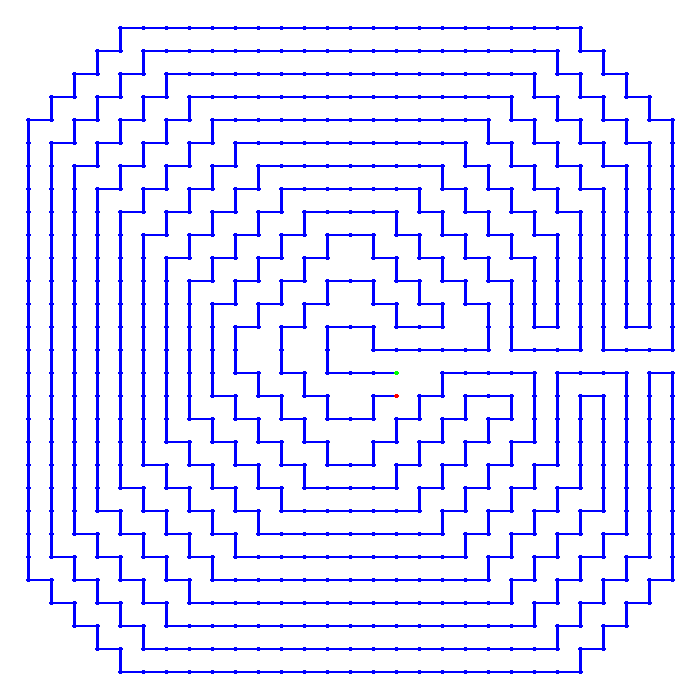}}
\end{center}
\caption{Generating walks that cannot be folded out}
\label{rtuc}
\end{figure}

\begin{proposition}
 $\forall n \leqslant 14, fSAW(n) = \mathfrak{G}_n$ whereas
$fSAW(107) \subsetneq \mathfrak{G}_{107}$ (see Figure~\ref{Vien}).

\noindent In other words, let $\nu_n$ the smallest $n\geqslant 2$ such
that $USAW(n) \neq \emptyset$.  Then $15\leqslant \nu_n\leqslant
107$.
\end{proposition}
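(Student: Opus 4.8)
The proposition packages two essentially independent claims that call for completely different techniques: the lower bound $\nu_n\geqslant 15$ is an exhaustive verification that $\mathfrak{G}_n$ is connected for every small $n$, whereas the upper bound $\nu_n\leqslant 107$ only requires producing a single explicit $107$-step walk and certifying that it admits no legal pivot. The plan is to treat the two bounds separately.

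For the lower bound it suffices to show, for each $n\in\llbracket 2,14\rrbracket$, that $fSAW(n)=\mathfrak{G}_n$. Granting this, the $\mathcal{R}_n$-class of the straight walk is all of $\mathfrak{G}_n$, which has $c_n>1$ vertices, so no $\mathcal{R}_n$-class has size $1$ and $USAW(n)=\emptyset$; hence the least $n\geqslant 2$ with $USAW(n)\neq\emptyset$ is at least $15$. To obtain $fSAW(n)=\mathfrak{G}_n$ I would enumerate $\mathcal{S}_n$ by a depth-first backtracking over absolute encodings, abandoning a branch as soon as the partial walk self-intersects; this is comfortable up to $n=14$, where $c_{14}$ is a few million. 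The graph need not be stored explicitly: from a walk $s$ one produces its at most $2n$ neighbours by replacing a suffix $s(k),\ldots,s(n-1)$ by its image under $f$, resp.\ under $f^{-1}$, and keeping only the self-avoiding candidates; a breadth-first search from $00\cdots 0$, maintaining a hash set of visited walks, then either reaches exactly $c_n$ walks --- in which case $\mathfrak{G}_n$ is connected --- or it does not. Carrying this out for every $n\leqslant 14$ settles $fSAW(n)=\mathfrak{G}_n$, hence $\nu_n\geqslant 15$.

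For the upper bound I would exhibit the $107$-step walk $w$ of Figure~\ref{Vien} through its absolute encoding $s\in(\mathds{Z}/4\mathds{Z})^{107}$ and then certify that $w$ is unfoldable. Concretely, for each pivot position $k$ (i.e.\ each interior vertex of $w$) I would form the two candidate walks $s(0),\ldots,s(k-1),f(s(k)),\ldots,f(s(106))$ and $s(0),\ldots,s(k-1),f^{-1}(s(k)),\ldots,f^{-1}(s(106))$ and check that each of them repeats a vertex, hence fails the self-avoiding property. Since this exhausts all $\pm 90°$ pivot moves applicable to $w$ (at most $214$ of them), $w$ is an isolated vertex of $\mathfrak{G}_{107}$: its $\mathcal{R}_{107}$-class is $\{w\}$, so $\{w\}\in USAW(107)$ and, in particular, $fSAW(107)\subsetneq\mathfrak{G}_{107}$. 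Combining the two paragraphs gives $15\leqslant\nu_n\leqslant 107$.

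Both certifications are mechanical; the real work is what precedes them. On the lower-bound side, $n=14$ is roughly the largest value for which the ``enumerate every SAW and run a BFS'' strategy is painless, and pinning down $\nu_n$ exactly would need either a substantially heavier computation or a structural criterion guaranteeing that a sufficiently short SAW always has a permitted pivot --- neither of which is available, which is exactly why only the interval $[15,107]$ is claimed. On the upper-bound side, the content is in discovering the $107$-step configuration of Figure~\ref{Vien} in the first place --- a search in the same spirit as the one behind the recursive family of Figure~\ref{rtuc} and the $223$-step walk of Figure~\ref{SokalnrSAW} --- since once the walk is written down, verifying that none of its suffix rotations is self-avoiding is a bounded, routine check.
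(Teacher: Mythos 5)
Your proposal is correct and follows essentially the same route as the paper: a computational breadth-first construction of the connected component of the straight line for each $n\leqslant 14$, checked against the full enumeration of $\mathcal{S}_n$, together with an explicit $107$-step walk (the one of Figure~\ref{saw107}, found by backtracking) certified unfoldable by exhaustively testing every $\pm 90$° pivot. The only cosmetic slip is attributing the walk to Figure~\ref{Vien} (the Venn diagram) rather than Figure~\ref{saw107}.
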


\begin{figure}
 \begin{center}
\subfigure[$\mathfrak{G}_n$ for $n\leqslant 14$]{\includegraphics[scale=0.7]{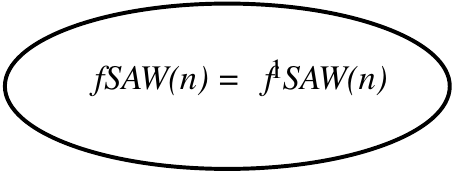}}\quad
\subfigure[Diagram of $\mathfrak{G}_n$ for $n=107$]{\includegraphics[scale=0.7]{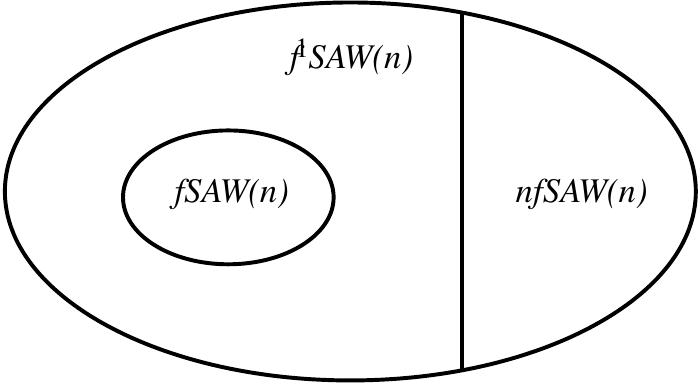}}\\
\caption{Vien diagram for $\mathfrak{G}_n$}
\label{Vien}
\end{center}
\end{figure}

\begin{proof}
  We have computed a program that constructs the connected component of
  the $n-$step straight line for $n\leqslant 14$, and at each time, we
  have obtained the whole $\mathfrak{G}_n$
  (see~\cite{articleCalculs}).  Additionally, we have obtained using a
  backtracking method the walk depicted in Figure~\ref{saw107}, which
  justifies the upper bound of 107: we have verified using a
  systematic program that no pivot move can be realized in that walk
  without breaking the self-avoiding requirement.  These programs,
  their explanations and justifications can be found
  in~\cite{articleCalculs}.
\end{proof}

\begin{figure}
\centering
\includegraphics[scale=0.5]{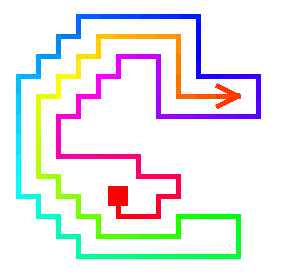}
\caption{Current smallest (107-step) SAW that cannot be folded out}
\label{saw107}
\end{figure}

\begin{proposition}
$\forall n \leqslant 28, f^1SAW(n) = \mathfrak{G}_n$.
\end{proposition}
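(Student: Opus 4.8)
The plan is to establish this computationally, by checking that $USAW(n)=\emptyset$ for every $n\leqslant 28$; since $f^1SAW(n)$ is by definition the complement of $USAW(n)$ in $\mathfrak{G}_n$, emptiness of $USAW(n)$ is exactly the assertion $f^1SAW(n)=\mathfrak{G}_n$. Note this strengthens the previous proposition, which only gave $\nu_n\geqslant 15$, to $\nu_n\geqslant 29$. First I would enumerate, for each fixed $n\leqslant 28$, all $n$-step self-avoiding walks on $\mathds{Z}^2$ in absolute encoding, via a standard depth-first backtracking: grow the walk one step at a time, pruning any branch that revisits an already-occupied vertex. To keep the search tractable I would quotient by the symmetry group of the square lattice (the dihedral group of order $8$) together with walk reversal: since ``being unfoldable'' is invariant under lattice symmetries and under reversal of the encoding, it suffices to inspect one representative per orbit, which reduces the workload by roughly a factor of $8$ to $16$.

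Second, for each enumerated walk $w$ with absolute encoding $s=s(0),\dots,s(n-1)$, I would test, for every pivot position $k\in\llbracket 0,n-1\rrbracket$ and each of the two rotations $f$ and $f^{-1}$, whether replacing the suffix $s(k),\dots,s(n-1)$ by $f(s(k)),\dots,f(s(n-1))$ (respectively with $f^{-1}$) again produces a self-avoiding walk. Concretely: the prefix vertices are unchanged, so one only recomputes the positions of the rotated suffix and checks for collisions against the whole walk, an $O(n)$ or $O(n\log n)$ test per candidate pivot, hence $O(n^2)$ in the worst case per walk. As soon as one admissible pivot is found, $w$ is declared in $f^1SAW(n)$ and the search moves on; only if some $w$ admitted no admissible pivot at all would it belong to $USAW(n)$ and the claim fail at that $n$ --- the content of the statement is that this does not occur for $n\leqslant 28$.

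The main obstacle is the combinatorial explosion: $c_n$ grows like $\mu^n$ with $\mu\approx 2.64$, so $c_{28}$ is on the order of $10^{12}$, which rules out storing all walks and makes a naive enumerate-then-test approach costly. I would mitigate this in two ways: (i) interleave the pivot test with the generation, so that walks are produced and consumed in a streaming fashion with essentially constant memory; and (ii) rely on aggressive early termination, since in practice a short walk almost always admits many legal $\pm 90°$ pivots, so the inner loop over $(k,\pm)$ succeeds after examining only a handful of candidates and the amortized cost per walk is close to $O(n)$ rather than $O(n^2)$. Finally I would cross-validate: compare the number of generated walks at each length with the independently known values of $c_n$ to confirm exhaustiveness; check consistency with the already-verified fact that $fSAW(n)=\mathfrak{G}_n$ (hence $f^1SAW(n)=\mathfrak{G}_n$) for $n\leqslant 14$; and verify that the $107$-step walk of Figure~\ref{saw107}, which lies outside the range $n\leqslant 28$, is correctly detected as having no admissible pivot, i.e. as an element of $USAW(107)$. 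The detailed program, together with these justifications, can be given as in~\cite{articleCalculs}.
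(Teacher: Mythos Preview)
Your proposal is correct and follows essentially the same approach as the paper: the result is obtained by an exhaustive computer search verifying $USAW(n)=\emptyset$ for all $n\leqslant 28$, with the detailed algorithm and justification deferred to~\cite{articleCalculs}. Your write-up is more explicit about the algorithmic choices (symmetry reduction, streaming enumeration, early termination, cross-validation against known $c_n$), but the underlying method is identical.
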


\begin{proof}
  Obtained experimentally, see~\cite{articleCalculs}.

  The results contained into the two previous propositions are
  summarized, with all intermediate computations, in
  Table~\ref{composante connexe}.  The $\sharp\mathfrak{G}_n$ values,
  obtained in~\cite{Jensen04a}, are recalled here for comparison.
\end{proof}

\begin{table}
 \centering
 \begin{tabular}{c|c|c|c|c}
 $n$ & $\sharp\mathfrak{G}_n$ & $\sharp f^1SAW(n)$ & $\sharp USAW(n) = \sharp \overline{f^1SAW(n)}$ & $\sharp fSAW(n)$\\
\hline
1 & 4 & 4 & 0 & 4 \\
2 & 12 & 12 & 0 & 12 \\
3 & 36 & 36  & 0 &  36 \\
4 & 100 & 100 & 0 & 100 \\
5 & 284 & 284 & 0 & 284 \\
6 & 780 & 780  & 0 & 780\\
7 & 2172 & 2172 & 0 & 2172 \\
8 & 5916 & 5916  & 0 & 5916 \\
9 & 16268 & 16268 & 0 & 16268  \\
10 & 44100 & 44100 & 0 & 44100 \\
11 & 120292 & 120292 & 0 & 120292 \\
12 & 324932 & 324932 & 0 & 324932  \\
13 & 881500 & 881500 & 0 & 881500 \\
14 & 2374444 & 2374444 & 0 & 2374444 \\
15 & 6416596 & 6416596 & 0 & ? \\
16 & 17245332 & 17245332 & 0 & ? \\
17 & 46466676 & 46466676 & 0 & ?\\
18 & 124658732 & 124658732 & 0 & ?\\
19 & 335116620 & 335116620 & 0 & ?\\
20 & 897697164  & 897697164 & 0 & ?\\
21 & 2408806028  & 2408806028 & 0 & ?\\
22 & 6444560484  & 6444560484 & 0 & ?\\
23 & 17266613812  & 17266613812 & 0 & ?\\
24 & 46146397316  & 46146397316 & 0 & ?\\
25 & 123481354908  & 123481354908 & 0 & ?\\
26 & 329712786220  & 329712786220 & 0 & ?\\
27 & 881317491628  & 881317491628 & 0 & ?\\
28 & 2351378582244  & 2351378582244 & 0 & ?\\ 
29 & 6279396229332   & ? & ? & ?\\
30 & 16741957935348   & ? & ? & ?\\
31 & 44673816630956   & ? & ? & ? \\
\vdots & \vdots & \vdots & \vdots & \vdots \\
107 & ? & ? & $\geqslant 1$ & ? \\
 \end{tabular}
\caption{Cardinalities of various subsets of SAWs}
\label{composante connexe}
\end{table}

Until now, connected components presented in this paper either have
the straight line, or are of size 1 or 2. A reasonable questioning is
to wonder whether it is possible to have larger connected components
different from the one of the straight line. We are founded to claim
that,

\begin{proposition}
It exists $k>2$ such that $fSAW(n,k)$ is nonempty.
\end{proposition}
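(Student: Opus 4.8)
The plan is to read the proposition at its intended strength. Taken literally it is immediate: for every $n\leqslant 14$ we have $fSAW(n)=\mathfrak{G}_n$ (a proposition above), so $\mathfrak{G}_n$ is a single equivalence class of size $\sharp\mathfrak{G}_n>2$, and hence $fSAW(n,\sharp\mathfrak{G}_n)\neq\emptyset$ with $\sharp\mathfrak{G}_n>2$ already at $n=1$. The real content of the statement — and the question raised just above it — is the existence of a connected component $C$ of some $\mathfrak{G}_n$ with $2<\sharp C<\infty$ that is \emph{not} the component of the straight line: a bounded but genuinely nontrivial cluster. It is this version I would establish, by a ``rigid core plus short appendage'' construction.

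Fix a self-avoiding walk $w$ already known to be unfoldable, for instance the $107$-step walk of Figure~\ref{saw107} (or Madras and Sokal's walk of Figure~\ref{SokalnrSAW}). Attach at one endpoint of $w$ a short appendage $a$ of a few steps, with $w\cdot a$ self-avoiding, and consider the equivalence class of $w\cdot a$ in $\mathfrak{G}_{|w|+|a|}$. The key observation is that the core can never be unfrozen: a $\pm 90°$ pivot about a vertex $w(j)$ with $j<|w|$ rotates a suffix that still contains the rotated image of the sub-walk $w(j{+}1),\dots,w(|w|)$, and in $w$ that image already intersected the prefix $w(0),\dots,w(j)$ (because $w$ is unfoldable, no pivot on it is legal); since both pieces belong to the unaltered core, the intersection survives whatever the appendage looks like. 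Consequently, on any walk with core $w$ the only legal moves are pivots about vertices of the appendage, with the swung suffix lying entirely inside the appendage. By induction every walk equivalent to $w\cdot a$ keeps the core $w$; in particular that class cannot contain the straight line, so it differs from $fSAW(n)$, and its cardinality equals the number of appendage shapes reachable from $a$ by such moves — a finite number, since $a$ has boundedly many steps.

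It then only remains to pick the attachment point and the appendage so that this number $k$ exceeds $2$. A one- or two-step appendage typically produces a rigid or a size-$2$ cluster (compare Figures~\ref{SokalnrSAW} and~\ref{fig:2self-avoiding-219.2}), so I would run a short systematic search over the endpoints of $w$ and over appendages of length $3,4,5,\dots$, at each step checking self-avoidance and enumerating the appendage's $\pm 90°$ move-graph relative to the frozen core, and report the first pair yielding $\sharp C=k>2$. These checks are finite and of exactly the kind already performed in~\cite{articleCalculs} for the $107$-step example; the construction then exhibits, for the corresponding $n$, an equivalence class of size $k>2$, i.e. $fSAW(n,k)\neq\emptyset$.

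The one genuine obstacle is this last calibration: the argument above already rules out that a longer appendage could accidentally free the core, but one must still verify that the appendage's own move-graph is neither collapsed to size $\leqslant 2$ nor so large that the search becomes unwieldy. That is where the real work sits, and it is a finite computation in the spirit of~\cite{articleCalculs}. A more self-contained alternative would be to adapt the recursive construction of unfoldable walks of~\cite{articleTheoreme} (Figure~\ref{rtuc}) so as to leave a bounded pocket of flexibility instead of none at all, but making that precise does not appear easier than the search.
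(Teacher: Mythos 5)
Your reading of the proposition is the right one, and your route is genuinely different from the paper's. The paper proves the statement by bare exhibition: a connected component of size $5$, found experimentally among $115$-step walks, is displayed in Figure~\ref{sawConnected}, and that figure is essentially the whole proof. You instead give a structural reduction: take a known unfoldable walk $w$ (e.g.\ the $107$-step walk of Figure~\ref{saw107}), graft a short appendage $a$ onto an endpoint, and observe that every pivot about a core vertex remains illegal in $w\cdot a$, because the offending intersection between the rotated core suffix and the fixed core prefix lives entirely inside the unaltered core and so survives any appendage; by induction the equivalence class of $w\cdot a$ is exactly the set of walks $w\cdot a'$ with $a'$ reachable by pivots internal to the appendage. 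That frozen-core lemma is correct, and it is the real added value of your argument: it shows the class is finite, excludes the straight line, and reduces the existence question to a tiny local search near one endpoint of $w$. (Your preliminary remark that the statement read literally is already satisfied by $fSAW(n)$ itself for $n\leqslant 14$ --- e.g.\ a class of size $36>2$ at $n=3$ --- is a fair catch; the sentence following the proposition confirms the intended reading, namely a class other than $fSAW(n)$.) What your write-up does not do is produce the final witness: you still owe one specific appendage whose reachable set exceeds two elements, and you leave that as a described-but-unexecuted finite computation. In fairness this leaves you at the same evidential level as the paper, whose witness is likewise a computer-found figure, and since a three-step appendage protruding into unobstructed space already has a large move-graph the calibration you worry about should be immediate; but as submitted your argument establishes the reduction rather than the existence itself.
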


In other words, connected components different from $fSAW(n)$ and 
larger than 1 or 2 elements exist. 
The result, which has been experimentally obtained,
can be proven by exhibiting a counterexample:
Figure~\ref{sawConnected} shows
a connected component of size 5.

\begin{figure}
  \begin{minipage}{0.3\textwidth}
    \centering\resizebox{\textwidth}{!}{
      \includegraphics[scale=0.5]{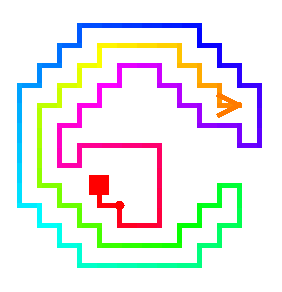}
    }
  \end{minipage}
  \begin{minipage}{0.3\textwidth}
    \centering\resizebox{\textwidth}{!}{
      \includegraphics[scale=0.5]{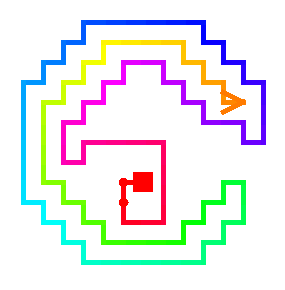}
    }
  \end{minipage}
  \begin{minipage}{0.3\textwidth}
    \centering\resizebox{\textwidth}{!}{
      \includegraphics[scale=0.5]{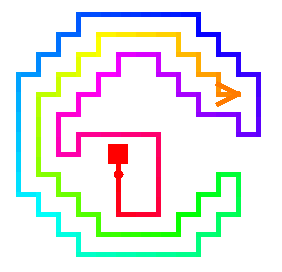}
    }
  \end{minipage}
  \begin{minipage}{0.3\textwidth}
    \centering\resizebox{\textwidth}{!}{
      \includegraphics[scale=0.5]{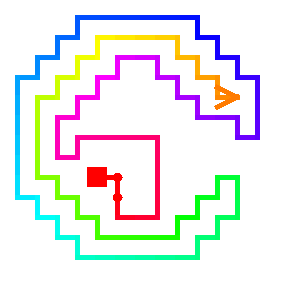}
    }
  \end{minipage}
  \begin{minipage}{0.3\textwidth}
    \centering\resizebox{\textwidth}{!}{
      \includegraphics[scale=0.5]{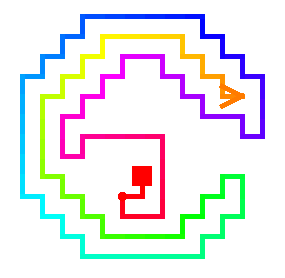}
    }
  \end{minipage}
  \caption{A connected component with 5 elements}
  \label{sawConnected}
\end{figure}

We can define a diameter function $D$ on the connected components of
$\mathfrak{G}_n$, such that $D(C)$ is the length of the longest
shortest path in the connected component $C$ of $\mathfrak{G}_n$.
Consider the connected component of the straight line $fSAW(n)$, we
have the result,

\begin{proposition}
  The diameter of $fSAW(n)$ is equal to $2n$: $D(fSAW(n)) =
  2n$.
\end{proposition}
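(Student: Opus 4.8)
The plan is to read an $n$-step SAW off its absolute encoding $s(0),\ldots,s(n-1)$ through the pair
$\big(s(0),(\tau_1,\ldots,\tau_{n-1})\big)\in(\mathds{Z}/4\mathds{Z})\times\{-1,0,1\}^{n-1}$,
where $\tau_k$ is the representative in $\{-1,0,1\}$ of $s(k)-s(k-1)\ (\mathrm{mod}\ 4)$, i.e.\ the turn at position $k$ (well defined since two consecutive steps of a SAW are never opposite; $\tau_k\neq 0$ is exactly the crank condition, so the number of cranks $C(w)$ satisfies $C(w)\leqslant n-1$). The structural fact I would establish first is that a $\pm 90°$ pivot at position $0$ is a global rotation, hence always admissible, shifting $s(0)$ by $\pm 1$ and fixing every $\tau_k$, whereas a $\pm 90°$ pivot at a position $k\geqslant 1$ leaves $s(0)$ and every $\tau_j$ with $j\neq k$ unchanged and changes only $\tau_k$, by $\pm 1$ (admissible iff the outcome is a SAW, which forces in particular $\tau_k\pm 1\in\{-1,0,1\}$). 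So each edge of $\mathfrak{G}_n$ moves exactly one of these $n$ ``coordinates'' by one unit, and $fSAW(n)$ sits inside the grid $(\mathds{Z}/4\mathds{Z})\times\{-1,0,1\}^{n-1}$, whose intrinsic diameter is already $2+2(n-1)=2n$, as a subgraph.

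For the lower bound I would use the grid distance itself as a potential:
$\Phi(w,w')=d_4\big(s(0),s'(0)\big)+\sum_{k=1}^{n-1}\big|\tau_k(w)-\tau_k(w')\big|$,
where $d_4$ is the cyclic distance on $\mathds{Z}/4\mathds{Z}$. Because each pivot changes a single coordinate by one unit, because the coordinate driven by pivots at $0$ never interacts with those driven by pivots at positions $\geqslant 1$, and because every vertex on a path is a SAW (so every $\tau_k$ stays in $\{-1,0,1\}$ throughout), any path in $\mathfrak{G}_n$ from $w$ to $w'$ has length at least $\Phi(w,w')$. Now take $w=0303\ldots$ (the East--North staircase) and $w'=2323\ldots$ (its mirror image across the vertical axis, the West--North staircase). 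Both are SAWs; $w\in fSAW(n)$ by Lemma~\ref{lemme}, and $w'\in fSAW(n)$ because $fSAW(n)$ is closed under global rotation (pivot at $0$) and one global rotation carries $w'$ onto the positive-direction staircase $3030\ldots$ covered by Lemma~\ref{lemme}. Their initial directions are opposite ($d_4=2$) and, at every position $k$, their turns are opposite and nonzero ($|\tau_k(w)-\tau_k(w')|=2$), whence $\Phi(w,w')=2+2(n-1)=2n$ and $D(fSAW(n))\geqslant d(w,w')\geqslant 2n$.

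For the upper bound the route is the triangle inequality through straight lines. Writing $\ell_s=ss\ldots s$ for the straight walk with initial direction $s$, the four $\ell_s$ lie in $fSAW(n)$ and are pairwise at distance $d_4(\cdot,\cdot)\leqslant 2$ (via pivots at $0$), so for $w_1,w_2\in fSAW(n)$ with initial directions $s_1,s_2$,
$d(w_1,w_2)\leqslant d\big(w_1,\ell_{s_1}\big)+d_4(s_1,s_2)+d\big(\ell_{s_2},w_2\big)$.
Hence it suffices to prove: \emph{every folded SAW $w$ reaches $\ell_{s(0)}$ in at most $n-1$ pivot moves.} Since a pivot changes the crank number by $0$ or $\pm 1$ and only pivots at positions $\geqslant 1$ change it (by exactly $\pm 1$), such a path has length at least $C(w)$, so the assertion amounts to unfolding $w$ essentially monotonically, crank by crank; granting it, $d(w_1,w_2)\leqslant C(w_1)+2+C(w_2)\leqslant (n-1)+2+(n-1)=2n$, which with the lower bound gives $D(fSAW(n))=2n$ (both bounds being attained by the staircase pair, where $C=n-1$ on each side and $d_4=2$).

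The hard part is exactly this monotone unfolding, which strengthens Lemma~\ref{lemme} from positive-direction walks to all folded SAWs. In Lemma~\ref{lemme} one may always straighten the \emph{first} crank, because for positive-direction walks the shrinking prefix and the rotated suffix stay in two disjoint quadrants; for a general folded SAW that separation breaks down, and one can already exhibit small folded SAWs in which straightening the first crank, or the last one, creates a self-intersection. The fix I would pursue is to straighten at each step a crank sitting at an extremal vertex of $w$ --- for instance the lexicographically largest vertex, whose two neighbours are forced to lie due West and due South of it, so that the $90°$ rotation of the suffix about it sweeps into a half-plane the prefix cannot meet --- at the cost of a geometric case analysis heavier than in Lemma~\ref{lemme}; alternatively one argues that, $w$ being folded, at least one crank can always be removed without collision. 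Pinning down this crank-selection step, not the $\Phi$-bookkeeping, is where the genuine work lies.
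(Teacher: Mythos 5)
Your turn-coordinate bookkeeping and the lower bound built on it are sound, and indeed more rigorous than the corresponding step in the paper (which uses the same kind of antipodal zigzag pair, $(0,1,0,1,\ldots)$ versus $(2,1,2,1,\ldots)$, but only argues informally that $2n$ moves are necessary): a pivot at position $0$ moves only $s(0)$ by one unit of cyclic distance, a pivot at $k\geqslant 1$ moves only the turn $\tau_k$ by exactly $\pm 1$ inside $\{-1,0,1\}$, so your potential $\Phi$ does give $d(w,w')\geqslant \Phi(w,w')=2n$ for the two staircases, both of which are folded. The genuine gap is the upper bound, and you flag it yourself: everything hinges on the claim that every $w\in fSAW(n)$ can be brought to the straight line $\ell_{s(0)}$ in at most $n-1$ pivot moves, i.e.\ that a folded SAW can be unfolded monotonically, one crank removed per move, without ever creating a self-intersection. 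Nothing in the proposal proves this; being folded only guarantees the existence of \emph{some} path to the straight line, with no control on its length or on the behaviour of the crank number along it.

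Moreover, the repair you sketch cannot work as stated. The extremal-vertex argument (``rotate the suffix about the lexicographically largest vertex; it sweeps into a half-plane the prefix cannot meet'') makes no use of the hypothesis that $w$ is folded, so if it were correct it would show that \emph{every} SAW admits at least one admissible $\pm 90$° pivot, contradicting the existence of unfoldable walks established in the paper (the $223$-step walk of Madras and Sokal and the $107$-step example, i.e.\ $USAW(223)\neq\emptyset$ and $USAW(107)\neq\emptyset$). Geometrically, the $90$° image of the half-plane containing the suffix and the half-plane containing the prefix still overlap in a quadrant based at the pivot, so collisions are not excluded. The alternative phrasing, ``a folded SAW always has at least one crank that can be removed without collision,'' is precisely the kind of local statement the unfoldable examples warn against, and it would need an argument genuinely exploiting foldedness (and even then it must yield a \emph{monotone} unfolding of length at most $n-1$, not merely one admissible move). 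To be fair, the paper's own justification of the upper bound is itself only a brief heuristic (three pivots at a position could be replaced by one in the opposite direction), but as it stands your proposal establishes only $D(fSAW(n))\geqslant 2n$; the missing crank-selection step is the real content of the proposition.
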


\begin{proof}
  We take the SAW $S_{z_1}$ defined as the zigzag $(0,1,0,1,0,...)$ and the
 $S_{z_2}$ defined as the zigzag $(2,1,2,1,2,...)$. 

 We can transform $S_{z_1}$ in $(2,3,2,3,2,...)$ by two pivot moves:
 $$(0,1,0,1...) \rightarrow (1,2,1,2,1,...) \rightarrow (2,3,2,3,2,...).$$ 
 Then two other pivot moves allow us to transform $(2,3,2,3,2,...)$ in
 $(2,1,0,1,0,...)$, that is,
 $$(2,3,2,3,2,...) \rightarrow (2,2,1,2,1,2,...)\rightarrow (2,1,0,1,0,1,...).$$
 As the respective visited vectices start by $(0,1), (1,2), (2,3),
 (2,2), (2,1)$, we obtain by doing so a simple path of length 4.  The
 process can be reproduced on the queue $(0,1,0...)$ of $
 (2,1,0,1,0...)$ until each 0's (odd positions) of the SAW has been
 transformed to 2, and each 1's (even position) has been set again to
 1. As there are two pivot moves for each value in the path and each
 pivot moves is in a different direction in $\mathfrak{G}_n$, so 
 the minimum distance from $S_{z_1}$ to $S_{z_2}$ in $\mathfrak{G}_n$
 is $2n$.

 This path, from $S_{z_1}$ to $S_{z_2}$, is the largest distance we
 can find in $\mathfrak{G}_n$ as we have two pivot moves on each
 edge. If we add indeed one more pivot move, i.e., three pivot moves,
 on an edge then the same value could be obtained from the initial
 position by making only one pivot move in the opposite direction
 which would reduce the distance between the two SAWs.
\end{proof}

\begin{example}
  In $fSAW(2)$, this diameter corresponds, for instance, to the
  shortest path $03 \rightarrow 00 \rightarrow 11 \rightarrow 12
  \rightarrow 23$ (see Figure~\ref{digraph2}).
\end{example}

\begin{figure}
\centering
\includegraphics[scale=0.5]{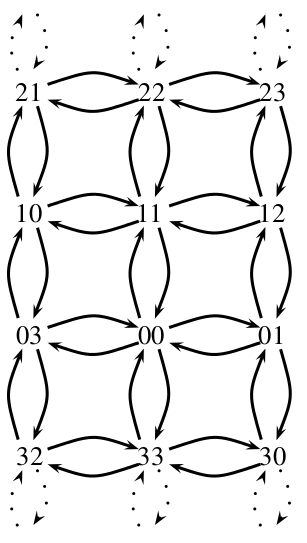}
\caption{The digraph $\mathfrak{G}_2 = fSAW(2)$}
\label{digraph2}
\end{figure}

\section{A list of Open Questions}
\label{sec:openquestions}

We enumerate in this section a list of open questions that have
appeared to us as interesting. Some of them should be very easy to solve,
whereas other ones may involve a degree of difficulty.

In the following we define $fSAW^d(n)$ as the class of equivalency of
the $n-$step straight walk on $\mathds{Z}^d$ and $\mathfrak{G}_n^d$ is
the equivalent of $\mathfrak{G}_n$ in $\mathds{Z}^d$. Note that $fSAW^2(n)$
is equal to $fSAW(n)$, as introduced in Definition~\ref{definitions}.

\begin{enumerate}
\item For any dimension $d$, do we have the existence of $n \in
  \mathds{N}^*$ such that $fSAW^d(n) \subsetneq \mathfrak{G}_n^d$?

\item $fSAW^2(2)$ and $fSAW^2(3)$ are obviously connected graphs, but
  they are not Eulerian. Indeed, more than two vertices have an odd
  degree both in $fSAW^2(2)$ and $fSAW^2(3)$ (see
  Figures~\ref{digraph2} and~\ref{digraph}). Is it the case for all
  $fSAW^d(n)$ ?

\item  $fSAW^2(2)$ and $fSAW^2(3)$ are Hamiltonian graphs, with the following 
Hamiltonian circuits:
\begin{itemize}
\item $00 \rightarrow 03 \rightarrow 32 \rightarrow 23 \rightarrow 10 \rightarrow 11 \rightarrow 22 \rightarrow 33 \rightarrow 30 \rightarrow 21 \rightarrow 12 \rightarrow 01 \rightarrow 00$ for $fSAW^2(2)$ (see Figure~\ref{digraph2}). 
\item $000 \rightarrow 003 \rightarrow 010 \rightarrow 011 \rightarrow 012 \rightarrow 001 \rightarrow 030 \rightarrow 323 \rightarrow 330 \rightarrow 301 \rightarrow 300 \rightarrow 333 \rightarrow 322 \rightarrow 321 \rightarrow 332 \rightarrow 303 \rightarrow 232 \rightarrow 233 \rightarrow 230 \rightarrow 223 \rightarrow 212 \rightarrow 211 \rightarrow 210 \rightarrow 221 \rightarrow 222 \rightarrow 111 \rightarrow 110 \rightarrow 121 \rightarrow 122 \rightarrow 123 \rightarrow 112 \rightarrow 101 \rightarrow 100 \rightarrow 103 \rightarrow 032 \rightarrow 033 \rightarrow 000$ for $fSAW^2(3)$ (see Figure~\ref{digraph}). 
\end{itemize}
Is it a coincidence, or is it the case for every $fSAW^d(n)$ ?

\item What is the exact value of the diameter $D(fSAW^d(n))$ ?

\item Do we have a connective constant for $fSAW^d(n)$. That is, does the limit
$\lim_{n \rightarrow +\infty}\phi_n^{1/n}$ exist, and can we bound it ?

\item $u_n=\sharp USAW^d(n)$ is an increasing sequence (for $d=2$, or
  for any $d$)?  Does it grow at a given (linear or exponential) rate?

\item Let $k \in \mathds{N}$. Is the sequence $v_n = \sharp fSAW(n,k)$
  increasing with $n$ ? If so, at which rate, and does it depend on the
  dimension $d$?  And what about the sequence $w_k = \sharp fSAW(n,k)$
  for a given $n$ ?

\item More simply, is there an unfoldable walk in
  $\mathds{Z}^3$ ?

\item Are the connected components of $\mathfrak{G}_n^d$ convex ? In
  other words, given two SAWs in a same component $C$. Are all (or at
  least one) the shortest paths connecting them on $\mathds{Z}^d$ 
  in $C$?

\item Is there a generating function expressing the folded
  self-avoiding walks more simply, making it possible to enumerate
  them on the square lattice (like what has been realized
  in~\cite{Conway1993}).

\item When we can fold a self-avoiding walk until a straight line,
  is it possible to fold it in such a way that the number of cranks
  decreases ? And for two given self-avoiding walks $w_i$ and $w_j$ of
  the same connected component of $\mathfrak{G}_n$, such that $w_i$
  has more cranks than $w_j$, is there a path from $w_i$ to $w_j$
  whose vertices' number of cranks is decreasing ? Is there a relation
  between the vertex depth and the number of cranks in $\mathds{Z}^d$?
\end{enumerate}

\section{Consequences on Protein Folding}
\label{sec:consequences}

This first theoretical study about folded self-avoiding walks raises several
questions regarding the protein structure prediction problem and the current
ways to solve it. In one category of PSP software, the protein is supposed to
be synthesized first as a straight line of amino acids, and then this line 
of a.a. is folded out until reaching a conformation that optimizes a given
scoring function. By doing so, the obtained backbone structures all belong
into $fSAW(n)$, where $n$ is the number of residues of the protein. The
second category of PSP software consider that, as the protein is already
in the aqueous solvent, it does not wait the end of the synthesis to take its
3D conformation. So they consider SAWs whose number of steps increases from
1 to the number of amino acids of the targeted protein and, at each step $k$,
the current walk is streched (one amino acid is added to the protein) in 
such a way that the pivot $k$ is placed in the position that optimizes the 
scoring function they consider. By doing so, the possible predicted backbones
are the whole $\mathfrak{G}_3$. The two sets of possible conformations are 
different, at least when considering 2D low resolution models.

\begin{figure}[h!]
\begin{center}
  \subfigure[Conformation having best score (27)]{\includegraphics[scale=0.3]{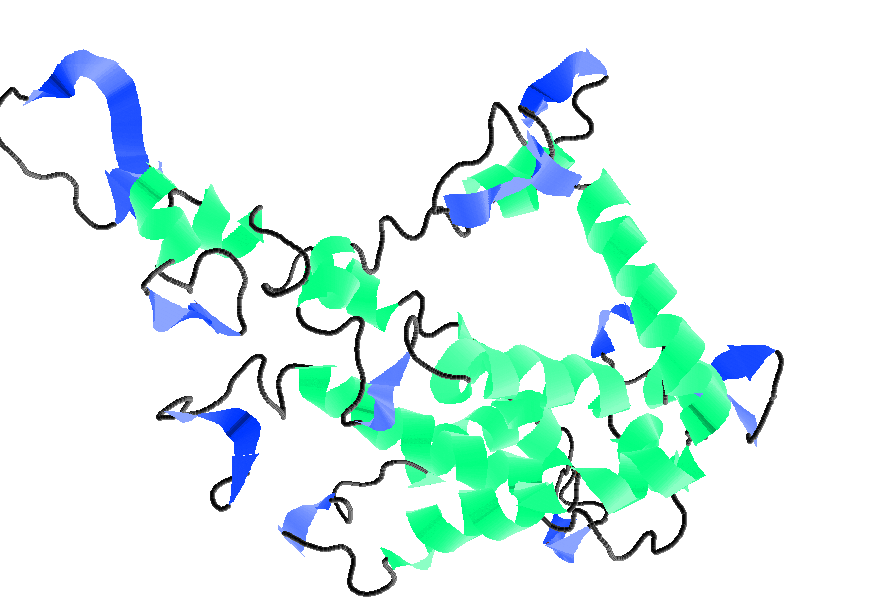}}\quad 
  \subfigure[Second best conformation (score 24)]{\includegraphics[scale=0.3]{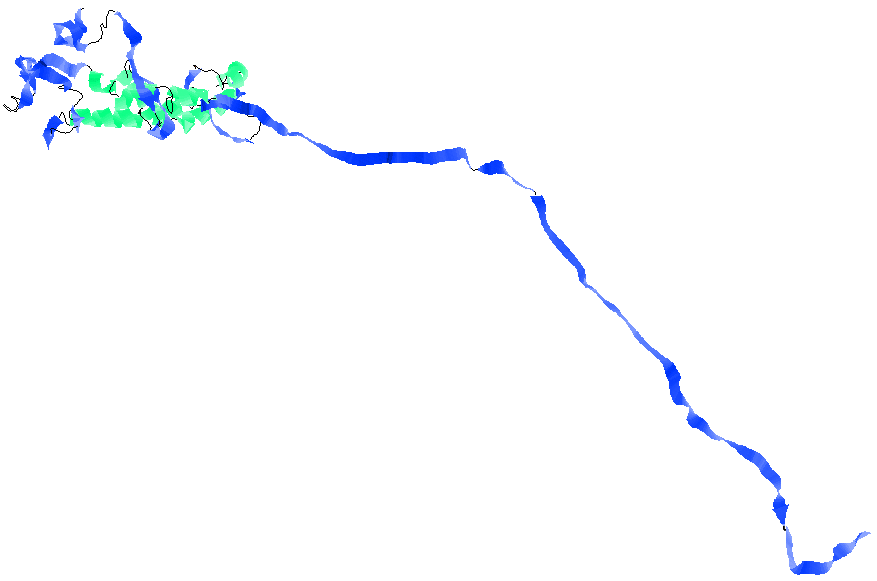}}\\
\end{center}
\caption{Illustration of chaos in protein folding (conformations have been predicted using RaptorX)}
\label{rtuc}
\end{figure}

We show by this work that (1) to take place in the first situation
(folding the straight line by a succession of pivot moves) can be
interesting as the number of possible SAW conformations is smaller
than $\sharp\mathfrak{G}_n$.  Indeed this interest is directly related
to the rate $\dfrac{\sharp fSAW(n)}{\sharp\mathfrak{G}_n}<1$. If this
rate decreases dramatically when $n$ increases, then the computational
advantage is obvious. However,we have currently no idea of such a
gain, that is, of the growing rate of $\sharp fSAW(n)$ compared to
$\sharp\mathfrak{G}_n<1$. (2) The use of heuristics instead of exact
methods (like SAT solvers for instance) is \emph{a priori} not
justified for PSP software that fold the straight line. Indeed, the
PSP problem has been proven NP hard on the set $\mathfrak{G}_n$ of all
possible SaWs. As they consider a strict subset of it, the complexity
of the problem might be reduced due to a lower number of cases to
consider. However, Proposition~\ref{prop3} tends to indicate that this
problem still remains difficult in $fSAW(n)$, which nevertheless
necessitates a rigorous complexity proof. (3) Biologically speaking,
to suppose that the proteins wait to be completely synthesized before
starting to fold appears as unrealistic, as the synthesis occurs in an
aqueous solvent. Indeed, the protein starts to fold during its
synthesis.  Furthermore, to the authors' opinion, it is restrictive to
consider that the head of the protein definitively stops to fold after
having synthesized. Such a supposition is equivalent to make a
confusion between local (the SAW at step $k$) and global (the final
optimal SAW) optimization. Indeed, authors of this manuscript
recognize honestly that they have no idea to determine if this third
approach (continuously folding the walk while stretching it) is more
reasonable than the previous ones, and if it is equivalent to either
$fSAW(n)$ or to $\mathfrak{G}_n$ (or if it constitutes a third
different subset of SAWs).

The authors' goal is only to point out the importance to determine the
best dynamical system to model protein folding before programming it
in PSP software, as this model determine which conformations can be
predicted.  A last remark to emphasize the importance of such a study:
authors of~\cite{bgc11:ip} have proven that the dynamical system used
in the ``folding the straight line'' category is chaotic according to
Devaney, meaning that any wrong choice of pivot move (due to
approximations in the scoring function, for instance) can potentially
become dramatic. Other researches (\cite{Braxenthaler97} for instance)
tend to show that the protein folding process intrinsically embeds a
certain amount of chaos. Thus, to use a more or less erroneous model
to predict the conformation could have grave consequences in
prediction quality.  Figure~\ref{rtuc} shows the two best
conformations predicted by RaptorX~\cite{raptorX}, a well-known PSP
software. We can see that using twice a same model, but with different
parameters can potentially lead to quite different conformations,
illustrating a possible effect of some chaotic properties exhibited by
the chosen model. We can reasonably wonder what is the effect of a
wrong model in such a prediction.

\section{Conclusion}

In this paper, the problem of self-avoiding walks folding in
the square lattice has been tackled. Regarding the protein structure 
prediction problem, we have shown
that the set of generated self-avoiding walks
depends on the PSP software category. In particular some particular
conformations cannot be reached by just folding the straight line
whereas they can be generated using random SAW generators as the pivot
algorithm. Starting from this fact, we have proposed a further exploration of
the folded self-avoiding walks. Different
subsets of  self-avoiding walks have been defined, 
like the set of unfoldable walks. We have shown that, even though their is an infinite number of
unfoldable SAWs, the number of folded SAWs is still exponential. After
having described the first obtained results on (un)folded SAWs,
we have proposed a list of open questions that could be explored on these
SAWs. Lastly, the link between (un)folded
SAWs and proteins has been questioned, and the consequences of the PSP software
choice on protein conformation has been highlighted.

Several research problems are interesting to further study and better
understand the properties of (un)folded SAWs, as shown in the open
questions section. Our future work will be concentrated on finding the
smallest unfolded SAWs, finding the smallest connected components of
unfolded SAWs, and on the optimization of energy levels of a given folded SAW.

\section{Acknowledgement}
The authors wish to thank Kamel Mazouzi, Thibaut Cholley, Raphaël
Couturier, and Alain Giorgetti for their help in understanding USAWs. All
the computations presented in the paper have been performed on the
supercomputer facilities of the Mésocentre de calcul de Franche-Comté.

\bibliographystyle{plain}
\bibliography{biblio}

\end{document}